\definecolor{MyDarkBlue}{rgb}{0.15,0.25,0.45}
\let\fn\footnote
\renewcommand{\footnote}[1]{\linespread{1.1}\fn{#1}\linespread{1.29}}
\makeatletter\renewcommand{\section}{\@startsection
{section}{1}{\z@}{-3.5ex plus -1ex minus
    -.2ex}{2.3ex plus .2ex}{\bf }}
\makeatletter\renewcommand{\subsection}{\@startsection{subsection}{2}{\z@}{-3.25ex
plus -1ex minus
   -.2ex}{1.5ex plus .2ex}{\bf }}
\makeatletter\renewcommand{\subsubsection}{\@startsection{subsubsection}{3}{-2.45ex}{-3.25ex
plus -1ex minus -.2ex}{1.5ex plus .2ex}{\it }}
\renewcommand{\thesection}{\arabic{section}}
\renewcommand{\thesubsection}{\arabic{section}.\arabic{subsection}}
\renewcommand{\@seccntformat}[1]{\@nameuse{the#1}.~~}
\renewcommand{\theequation}{\thesection.\arabic{equation}}
\makeatletter \@addtoreset{equation}{section}
\newtheorem{thm}{Theorem}[section]
\renewcommand{\thethm}{\thesection.\arabic{thm}}
\newtheorem{lemma}[thm]{Lemma}
\newtheorem{definition}[thm]{Definition}
\newtheorem{theorem}[thm]{Theorem}
\newtheorem{proposition}[thm]{Proposition}
\newtheorem{remark}[thm]{Remark}
\newtheorem{example}[thm]{Example}
\renewcommand{\appendices}{
\section*{Appendix}\label{appendices}\setcounter{subsection}{0}
\addcontentsline{toc}{section}{Appendix}
\setcounter{equation}{0}
\makeatletter
\renewcommand{\theequation}{\Alph{subsection}.\arabic{equation}}
\renewcommand{\thesubsection}{\Alph{subsection}}
\renewcommand{\thethm}{\Alph{subsection}.\arabic{thm}}
\@addtoreset{equation}{subsection}
\@addtoreset{thm}{subsection}
\makeatother
}
\newcommand{\myxymatrix}[1]{\vcenter{\vbox{\xymatrix{#1}}}}
\def\slasha#1{\setbox0=\hbox{$#1$}#1\hskip-\wd0\hbox to\wd0{\hss\sl/\/\hss}}
\def\periodb#1{\setbox0=\hbox{$#1$}#1\hskip-\wd0\hbox to\wd0{-}}
\newcommand{\lsb}{[\hspace{-0.05cm}[}
\newcommand{\rsb}{]\hspace{-0.05cm}]}
\newcommand{\id}{\mathrm{id}}   			
\newcommand{\CA}{\mathcal{A}}    			
\newcommand{\CC}{\mathcal{C}}
\newcommand{\CD}{\mathcal{D}}
\newcommand{\CL}{\mathcal{L}}
\newcommand{\CM}{\mathcal{M}}
\newcommand{\CV}{\mathcal{V}}
\newcommand{\CE}{\mathcal{E}}
\newcommand{\frX}{\mathfrak{X}}
\newcommand{\FR}{\mathbbm{R}}     			
\newcommand{\FC}{\mathbbm{C}}     			
\newcommand{\NN}{\mathbbm{N}}     			
\newcommand{\RZ}{\mathbbm{Z}}     			
\newcommand{\dd}{\mathrm{d}}     			
\newcommand{\dpar}{\partial}     			
\newcommand{\eps}{{\varepsilon}}			
\newcommand{\eand}{{\qquad\mbox{and}\qquad}}     		
\newcommand{\der}[1]{\frac{\dpar}{\dpar #1}}   		
\newcommand{\sU}{\mathsf{U}}     			
\newcommand{\sG}{\mathsf{G}}
\newcommand{\sL}{\mathsf{L}}
\newcommand{\sEnd}{\mathsf{End}\,}
\def\tyng(#1){\hbox{\tiny$\yng(#1)$}}			
\def\tyoung(#1){\hbox{\tiny$\young(#1)$}}			
\newcommand{\beq}{\begin{eqnarray}}
\newcommand{\eeq}{\end{eqnarray}}
\newcommand{\CatDiff}{\mathsf{Diff}}
\newcommand{\CatLinftyAlg}{\mathsf{L_\infty Alg}}
\newenvironment{myitemize}{
\vspace{-2mm}\begin{itemize}
\setlength{\itemsep}{-1mm}
}{\vspace{-2mm}\end{itemize}}
\begin{document}
\begin{titlepage}
\begin{flushright}
 DIFA 2015\\
 EMPG--15--08
\end{flushright}
\vskip 2.0cm
\begin{center}
{\LARGE \bf Automorphisms of Strong Homotopy \\[0.4cm] Lie Algebras of Local Observables}
\vskip 1.5cm
{\Large Patricia Ritter$^a$ and Christian S\"amann$^b$}
\setcounter{footnote}{0}
\renewcommand{\thefootnote}{\arabic{thefootnote}}
\vskip 1cm
{\em${}^a$ Dipartimento di Fisica ed Astronomia \\
Universit\`a di Bologna and INFN, Sezione di Bologna\\
Via Irnerio 46, I-40126 Bologna, Italy
}\\[0.5cm]
{\em ${}^b$ Maxwell Institute for Mathematical Sciences\\
Department of Mathematics, Heriot-Watt University\\
Colin Maclaurin Building, Riccarton, Edinburgh EH14 4AS,
U.K.}\\[0.5cm]
{Email: {\ttfamily patricia.ritter@bo.infn.it , C.Saemann@hw.ac.uk}}
\end{center}
\vskip 1.0cm
\begin{center}
{\bf Abstract}
\end{center}
\begin{quote}
There is a well-established procedure of assigning a strong homotopy Lie algebra of local observables to a multisymplectic manifold which can be regarded as part of a categorified Poisson structure. For a 2-plectic manifold, the resulting Lie 2-algebra is isomorphic to a sub Lie 2-algebra of a natural Lie 2-algebra structure on an exact Courant algebroid. We generalize this statement to arbitrary $n$-plectic manifolds and study automorphisms on the arising Lie $n$-algebras. Our observations may be useful in studying the quantization problem on multisymplectic manifolds.
\end{quote}
\end{titlepage}

\tableofcontents

\section{Introduction and results}

The quantization of multisymplectic manifolds \cite{Cantrijn:1999aa} is a long-standing problem. Its study is motivated by string and M-theory, where quantized multisymplectic manifolds are indeed believed to appear, see e.g.\ \cite{Saemann:2012ex} and references therein. On the other hand, the quantization of multisymplectic manifolds should be a categorified notion of quantization of symplectic manifolds. The full understanding of a mathematical concept should also include an understanding of its categorified analogues.

In this context, it is particularly important to have a detailed
picture of the classical structures behind multisymplectic manifolds
such as the (higher) Lie algebra of local observables. It was first
suggested in \cite{Baez:2008bu} that in the case of 2-plectic
manifolds, which are manifolds endowed with a closed and
non-degenerate 3-form, the observables form a semistrict Lie 2-algebra
consisting of functions and Hamiltonian one-forms. This Lie 2-algebra
has many desirable properties. First, it is given by a central
extension of the Lie algebra of Hamiltonian vector fields, just as an
ordinary Poisson algebra. Second, string theory considerations suggest
that for a 2-plectic manifold foliated by symplectic hyperplanes, a
nice reduction procedure from 2-plectic to symplectic geometry should
exist, which is the case here, cf.\ e.g.\ \cite{Ritter:2013wpa}. Finally, recall that the symplectic form of quantizable manifolds defines a principal $\sU(1)$-bundle $P$, and the Lie algebra of observables embeds into the Lie algebra of global sections of the Atiyah algebroid of $P$. There is an analogous statement for the Lie 2-algebra of local observables. Here, this Lie 2-algebra embeds into the Lie 2-algebra of global sections of a Courant algebroid corresponding to the $\sU(1)$-bundle gerbe associated with the 2-plectic manifold \cite{Baez:2008bu}.

The assignment of a categorified Lie algebra of local observables can be generalized to arbitrary multisymplectic manifolds \cite{Rogers:2010nw}, see also \cite{Fiorenza:1304.6292}. In particular, we obtain a semistrict Lie $n$-algebra of local observables from an $n$-plectic manifold. However, the embedding into higher analogues of Courant algebroids seems to have remained open until now \cite{Zambon:2010ka}. A first goal of our paper is to fill this gap. The Courant algebroid corresponding to the $\sU(1)$-bundle gerbe associated to a 2-plectic manifold $(M,\varpi)$ is a special symplectic Lie 2-algebroid with underlying vector bundle $TM\oplus T^*M$ and twisted by the 2-plectic form $\varpi$. In fact, one can build a symplectic Lie $n$-algebroid for each $n$-plectic manifold $(M,\varpi)$ containing the vector bundle $TM\oplus \wedge^{n-1} T^*M$, which we call Vinogradov algebroid. This algebroid can be twisted by $n$-plectic forms, and the twisted cases $n=1$ and $n=2$ correspond precisely to the Atiyah and the Courant algebroids. One can now show that the twisted Vinogradov algebroid comes with an associated Lie $n$-algebra containing a sub Lie $n$-algebra which is weakly isomorphic to the Lie $n$-algebra of local observables on $(M,\varpi)$.

The fact that we needed a weak isomorphism to embed the strong homotopy Lie algebra of local observables, or {\em shlalo} for short, points to a second issue. The assignment of a shlalo to an $n$-plectic manifold $(M,\varpi)$ is functorial in the sense that strict automorphisms of Lie $n$-algebras will turn the shlalo into the shlalo of an $n$-plectic manifold diffeomorphic to $(M,\varpi)$. This, however, is not the case for weak automorphisms. In particular, the weak automorphism used in embedding the shlalo into the Lie 2-algebra of sections of the Courant algebroid yields a Lie 2-algebra that is readily seen not to be the shlalo of any 2-plectic manifold.

This issue can be cured by adding structures to the $n$-plectic manifolds which transform non-trivially under weak automorphisms of Lie $n$-algebras and participate in the assignment of a shlalo to such enhanced $n$-plectic manifolds. These structures take the form of $k$-ary graded antisymmetric, linear brackets of degree $1-k$ acting on $k$-tuples of differential forms.

A similar issue appears in the assignment of a Lie $n$-algebra to a symplectic Lie $n$-algebroid. Weak automorphisms of the Lie $n$-algebra do not yield associated Lie $n$-algebras of symplectic Lie $n$-algebroids. Again, a fix for this issue is to add analogous brackets to the symplectic Lie $n$-algebroid which transform appropriately. 

Our generalization of multisymplectic manifolds to enhanced multisymplectic manifolds with brackets has a number of advantages. First, it seems to give a nice interpretation of the categorification of a symplectic manifold, at least in the case $n=2$. Here, an enhanced 2-plectic manifold $(M,\varpi,\langle-,\ldots,-\rangle)$ is, under mild restrictions, a symplectic Lie $1$-algebroid $T[1]M$. This is the natural categorification of a symplectic manifold, which can be regarded as a symplectic Lie $0$-algebroid, cf.\ \cite{Severa:2001aa}. Moreover, a Lie 1-algebroid can be regarded as a categorified manifold or 2-space\footnote{That is, a category internal to the category of smooth manifolds.} $T[1]M\rightrightarrows M$. Finally, the underlying 2-vector space of the shlalo is precisely the algebra of functions of degree 0 on the symplectic Lie $1$-algebroid $T[1]M$.

Another advantage is that the brackets provide some retrospective motivation for the quantization procedure of 2-plectic manifolds proposed in \cite{DeBellis:2010pf}, see also \cite{Barron:2014bma}. Here, 2-plectic manifolds were considered that are simultaneously symplectic manifolds. The symplectic structure was used to construct a quantum Hilbert space and the 2-plectic structure was then mapped into an additional structure on this Hilbert space. In the case of enhanced 2-plectic manifolds, certain brackets encode a symplectic form, which then can be quantized in a compatible way with the 2-plectic structure.

Finally, it seems that the brackets on multisymplectic manifolds provide us with a richer or more general higher product structure on the corresponding shlalo. Again, under mild restrictions, the resulting higher products are compatible with an additional natural structure which may be regarded as the higher analogue of the associative product in an ordinary Poisson algebra. 

This paper is structured as follows. We review $L_\infty$-algebras and their automorphisms in section 2. We then discuss the assignment of a strong homotopy Lie algebra of local observables to a multisymplectic manifold in section 3. There, we also extend the picture to enhanced multisymplectic manifolds. Section 4 gives a brief review of symplectic Lie $n$-algebroids, before the embedding of shlalos into the associated Lie $n$-algebras of symplectic Lie $n$-algebroids is proved in section 4.2. Section 4.3 again completes the picture by introducing brackets, enhancing symplectic Lie $n$-algebroids. An appendix summarizes definitions related to the description of $L_\infty$-algebras in terms of differential graded coalgebras for the reader's convenience.

\section{The category $\CatLinftyAlg$}

\subsection{Strong homotopy Lie algebras}

Strong homotopy Lie algebras or $L_\infty$-algebras can be defined in a number of ways. Possibly the most familiar form is that of a graded vector space together with a set of multilinear graded antisymmetric brackets or higher products which satisfy higher Jacobi identities \cite{Lada:1992wc}. This definition is equivalent to a  differential graded cocommutative coalgebra as shown in \cite{Lada:1994mn}. If the subspaces of homogeneous grading are finite dimensional, we can take their duals and arrive at a differential graded commutative algebra. This definition of $L_\infty$-algebras is quite familiar to physicists from the formalism of BV- or BRST-quantization, cf.\ \cite{Lada:1992wc}, as well as the AKSZ-formulation of topological field theories \cite{Alexandrov:1995kv}. The language of the latter, namely N$Q$-manifolds yields probably the most concise and readily accessible definition:
\begin{definition}\label{def:NQ-manifold}
 An \uline{N$Q$-manifold} is an $\NN$-graded manifold endowed with a vector field $Q$ of degree $1$ satisfying $Q^2=0$.
\end{definition}
N$Q$-manifolds which are trivial\footnote{General N$Q$-manifolds are in one-to-one correspondence with $L_\infty$-algebroids.} in degree $0$ are in one-to-one correspondence to $L_\infty$-algebras, for which we can take degree-wise duals of the underlying graded vector spaces. Unfortunately, the $L_\infty$-algebras we will be interested in have infinite-dimensional vector subspaces of homogeneous grading and taking duals thus is problematic. We therefore have to start from the dual picture in terms of coalgebras. 

Given a $\RZ$-graded vector space $V=\oplus_i V_i$ consisting of vector subspaces $V_i$ with grading $i$, the graded vector space $V[n]:=\oplus_i V_i$ consist of the same subspaces $V_i$ with shifted grading $i+n$. Moreover, the reduced graded symmetric algebra $\bar S(V)$ of a graded vector space $V$ consists of $V\oplus (V\odot V)\oplus V^{\odot 3}\oplus \ldots$, where $\odot$ denotes the graded symmetric tensor product. Recall that $\bar S(V)$ naturally carries the structure of a cocommutative graded coalgebra. For more details on this point and related definitions, see appendix \ref{app:Definitions}.
\begin{definition}\label{def:shLa-coalgebra}
 An \uline{$L_\infty$-algebra} is a pair $(\sL,\CD)$, where $\sL$ is a graded vector space and $\CD$ is a graded coalgebra coderivation on the graded coalgebra $\bar S(\sL[-1])$ with $\CD^2=0$. An \uline{$n$-term $L_\infty$-algebra} is an $L_\infty$-algebra, in which $\sL$ is concentrated in degrees $1-n,\ldots,0$.
\end{definition}
\noindent Note that $n$-term $L_\infty$-algebras are believed to be categorically equivalent to semistrict Lie $n$-algebras. For $n=2$, this statement has been proved in \cite{Baez:2003aa}.

Projecting the image of the coderivation $\CD$ onto $\sL[-1]\subset \bar S(\sL[-1])$ yields a linear map $\CD^1:\bar S(\sL[-1])\rightarrow \sL[-1]$ which uniquely defines all of $\CD$, cf.\ appendix \ref{app:Definitions}. Furthermore, $\CD^1$ is a sum of maps $\mu_i:(\sL[-1])^{\odot i}\rightarrow \sL[-1]$ and these maps are the above mentioned higher products $\mu_i:\sL^{\wedge i}\rightarrow \sL$ with grading $2-i$. The condition $\CD^2=0$ translates to a number of higher or homotopy Jacobi identities on the higher products, amongst which are
\begin{equation}
 \mu_1\circ \mu_1=0~,~~~\mu_1\circ \mu_2=\mu_2\circ \mu_1\eand \mu_1\circ\mu_3+\mu_2\circ\mu_2+\mu_3\circ\mu_1=0~.
\end{equation}
The firsts equation implies that $\mu_1$ is a differential, and the second equation states that this differential is compatible with an antisymmetric product. The third equation yields a controlled violation of the Jacobi identity.

As an example, consider the case of a Lie 2-algebra $\sL=\sL_{-1}\oplus \sL_0$ with $x,x_{1,2,\ldots}\in\sL_0$ and $y,y_{1,2,\ldots}\in \sL_{-1}$. The concentration of $\sL$ in degrees $-1$ and $0$ and the condition $\CD^2=0$ are then equivalent to
\begin{subequations}\label{eq:homotopy_relations}
\begin{equation}
\begin{aligned}
 \mu_1(x)&=0~,~~~\mu_2(y_1,y_2)=0~,\\
 \mu_1(\mu_2(x,y))&=\mu_2(x,\mu_1(y))~,~~~\mu_2(\mu_1(y_1),y_2)=\mu_2(y_1,\mu_1(y_2))~,\\
 \mu_3(y_1,y_2,y_3)&=\mu_3(y_1,y_2,x)=\mu_3(y_1,x_1,x_2)=0~,\\
 \mu_1(\mu_3(x_1,x_2,x_3))&=-\mu_2(\mu_2(x_1,x_2),x_3)-\mu_2(\mu_2(x_3,x_1),x_2)-\mu_2(\mu_2(x_2,x_3),x_1)~,\\
 \mu_3(\mu_1(y),x_1,x_2)&=-\mu_2(\mu_2(x_1,x_2),y)-\mu_2(\mu_2(y,x_1),x_2)-\mu_2(\mu_2(x_2,y),x_1) 
\end{aligned}
\end{equation}
and
\begin{equation}
\begin{aligned}
 \mu_2(\mu_3(x_1,&x_2,x_3),x_4)-\mu_2(\mu_3(x_4,x_1,x_2),x_3)+\mu_2(\mu_3(x_3,x_4,x_1),x_2)\\
 & -\mu_2(\mu_3(x_2,x_3,x_4),x_1)=\\
 &\mu_3(\mu_2(x_1,x_2),x_3,x_4)-\mu_3(\mu_2(x_2,x_3),x_4,x_1)+\mu_3(\mu_2(x_3,x_4),x_1,x_2)\\
 &-\mu_3(\mu_2(x_4,x_1),x_2,x_3)
 -\mu_3(\mu_2(x_1,x_3),x_2,x_4)-\mu_3(\mu_2(x_2,x_4),x_1,x_3)~.
\end{aligned}
\end{equation}
\end{subequations}

\subsection{Morphisms of $L_\infty$-algebra}\label{ssec:sh-morphisms}

A na\"ive definition of morphisms between two $L_\infty$-algebras is given by chain maps between the underlying complexes. Here, however, we are interested in generalized, weak or quasi-isomorphisms of $L_\infty$-algebras. The appropriate definition of these is readily inferred if $L_\infty$-algebras are regarded as differential graded coalgebras as in definition \ref{def:shLa-coalgebra}, cf.\ e.g.\ appendix A of \cite{Fregier:2013dda}. 
\begin{definition}
 Given $L_\infty$-algebras $(\sL,\CD)$ and $(\sL',\CD')$, a \uline{(weak) morphism of $L_\infty$-algebras} $\Psi:(\sL,\CD)\rightarrow (\sL',\CD')$ is a morphism of the underlying graded coalgebras compatible with the codifferentials. That is, $\Psi\circ \CD=\CD'\circ \Psi$. 
\end{definition}
Note that we can again restrict the image of $\Psi$ to $\sL'$, yielding a map $\Psi^1$ which fully defines $\Psi$. This map is a sum $\Psi^1=\Psi^1_1+\Psi^1_2+\ldots$, where $\Psi^1_i:\sL^{\wedge i}\rightarrow \sL$ are linear maps of degree $1-i$. Clearly, a morphism of $n$-term $L_\infty$-algebras is a morphism with $\Psi^1_k=0$ for $k>n$.

Often in the literature, only strict morphisms of $L_\infty$-algebras are defined.
\begin{definition}
 A morphism of $L_\infty$-algebras $\Psi$ is called \uline{strict}, if $\Psi^1_i=0$ for $i>1$. 
\end{definition}

To translate back to the bracket formulation, apply the equation $\Psi\circ \CD=\CD'\circ \Psi$ to elements of $\sL[-1]^{\odot j}$ for each $j$. We then rewrite $\Psi$, $\CD$ and $\CD'$ in terms of $\Psi^1=\Psi^1_1+\Psi^1_2+\ldots$, $\CD^1=\mu_1+\mu_2+\ldots$ and $(\CD')^1=\mu'_1+\mu'_2+\ldots$. This results in the following relations:
\begin{equation}
 \begin{aligned}
  &\mu'_1(\Psi^1_1(\ell))=\Psi^1_1(\mu_1(\ell))~,\\
  &\mu'_2(\Psi^1_1(\ell_1),\Psi^1_1(\ell_2))=\Psi^1_1(\mu_2(\ell_1,\ell_2))\pm\Psi^1_2(\mu_1(\ell_1),\ell_2)\pm \Psi^1_2(\ell_1,\mu_1(\ell_2))+\mu'_1(\Psi^1_2(\ell_1,\ell_2))~,\\
  &\mu'_3(\Psi^1_1(\ell_1),\Psi^1_1(\ell_2),\Psi^1_1(\ell_3))=\Psi^1_1(\mu_3(\ell_1,\ell_2,\ell_3))-\mu'_1(\Psi^1_3(\ell_1,\ell_2,\ell_3))-\\
  &\hspace{4.5cm}-\left(\Psi^1_2(\ell_1,\mu_2(\ell_2,\ell_3))\pm\mu'_2(\Psi^1_1(\ell_1),\Psi^1_2(\ell_2,\ell_3))+\mbox{cycl.}\right)
 \end{aligned}
\end{equation}
etc., where the signs need to be adjusted according to the grading of the maps and arguments. Note that $\Psi^1_1$ is simply a chain map of the underlying complexes. 

We are now in a position to define the category $\CatLinftyAlg$.
\begin{definition}
 The category $\CatLinftyAlg$ has $L_\infty$-algebras as objects and weak $L_\infty$-algebra morphisms as morphisms.
\end{definition}

Let us conclude this section with a brief remark on automorphisms of
$L_\infty$-algebras which are invertible morphisms of
$L_\infty$-algebras.
\begin{lemma}
 An automorphism on an $L_\infty$-algebras $\Psi:(\sL,\CD)\rightarrow (\sL,\CD')$ is specified by an invertible chain map $\Psi^1_1:\sL\rightarrow \sL$ together with an arbitrary set of linear maps $\Psi^1_i:\sL^{\wedge i}\rightarrow \sL$ with $i>1$ of degree $1-i$.
\end{lemma}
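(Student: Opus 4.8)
The plan is to unwind the coalgebra definition of a morphism and see precisely which data are free. Recall that a weak morphism $\Psi:(\sL,\CD)\rightarrow(\sL,\CD')$ is a morphism of the cocommutative coalgebras $\bar S(\sL[-1])$ compatible with the codifferentials. By the cofreeness of $\bar S(\sL[-1])$, such a coalgebra morphism is uniquely and freely determined by its corestriction $\Psi^1:\bar S(\sL[-1])\rightarrow\sL[-1]$, equivalently by the collection of components $\Psi^1_i:\sL^{\wedge i}\rightarrow\sL$ of degree $1-i$ for $i\geq 1$. So as a coalgebra morphism, $\Psi$ is \emph{nothing but} an arbitrary sequence $(\Psi^1_1,\Psi^1_2,\ldots)$. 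The content of the lemma is then: (a) among these, invertibility of $\Psi$ as a coalgebra morphism is equivalent to invertibility of $\Psi^1_1$; and (b) once $\Psi^1_1$ is fixed to be an invertible chain map with respect to $\CD$ and some $\CD'$, the higher components $\Psi^1_i$, $i>1$, can be chosen arbitrarily because one is free to \emph{define} $\CD'$ by transport of structure, $\CD':=\Psi\circ\CD\circ\Psi^{-1}$, which automatically satisfies $(\CD')^2=0$ and the compatibility $\Psi\circ\CD=\CD'\circ\Psi$.

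I would carry this out in the following order. First, state that a coalgebra morphism on $\bar S(\sL[-1])$ is freely specified by its components $\{\Psi^1_i\}_{i\geq1}$ (citing appendix \ref{app:Definitions}); in particular the associated algebra endomorphism of the coalgebra is built from these by the usual coproduct/deconcatenation formula. Second, prove the invertibility criterion: the induced map on $\bar S(\sL[-1])$ is filtered by symmetric word length, and on the associated graded it acts as $\Psi^1_1{}^{\odot k}$ on the $k$-th symmetric power; hence $\Psi$ is invertible as a coalgebra morphism if and only if $\Psi^1_1$ is invertible, and in that case the inverse is again a coalgebra morphism whose components can be computed recursively (lowest order $\Psi^{-1}$ has first component $(\Psi^1_1)^{-1}$ and the higher ones are determined by inverting a triangular system). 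Third, observe that for \emph{any} invertible coalgebra morphism $\Psi$ and any codifferential $\CD$ on $\bar S(\sL[-1])$, the conjugate $\CD':=\Psi\circ\CD\circ\Psi^{-1}$ is again a coderivation (conjugation by a coalgebra automorphism preserves the coderivation property) and squares to zero, so $(\sL,\CD')$ is an $L_\infty$-algebra and $\Psi$ is by construction a morphism $(\sL,\CD)\rightarrow(\sL,\CD')$. Fourth, note the degree-$0$ part of the compatibility relation is exactly the statement that $\Psi^1_1$ intertwines $\mu_1$ and $\mu'_1$, i.e.\ $\Psi^1_1$ is a chain map — matching the first of the translated relations displayed just above the lemma. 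Conversely, given the data in the statement, we have just shown how to produce $\CD'$, so the two descriptions agree.

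The only real subtlety — and the step I expect to be the minor obstacle — is the invertibility direction and the well-definedness of the recursion for $\Psi^{-1}$ when $\sL$ is not finite-dimensional in each degree: one must make sure the recursive definition of the components $(\Psi^{-1})^1_i$ only ever uses finitely many already-constructed lower components and finitely many compositions, so that no completion or convergence issue arises. This is true because the deconcatenation coproduct on $\bar S(\sL[-1])$ is locally finite (each symmetric word has finitely many deconcatenations) and the word-length filtration is exhausted in finitely many steps on any fixed element; hence the triangular system defining $(\Psi^{-1})^1_i$ in terms of $\Psi^1_1,\ldots,\Psi^1_i$ and $(\Psi^{-1})^1_1,\ldots,(\Psi^{-1})^1_{i-1}$ is solvable with a single application of $(\Psi^1_1)^{-1}$. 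Everything else is bookkeeping with Koszul signs, which I would not belabor. I would close by remarking that the same argument, run without the invertibility hypothesis, shows more generally that \emph{any} choice of $\Psi^1_1$ a chain map together with arbitrary higher $\Psi^1_i$ defines a (not necessarily invertible) weak morphism onto a suitable $(\sL,\CD')$ only when $\Psi^1_1$ is invertible; in the non-invertible case one cannot freely prescribe $\CD'$, which is precisely why the lemma is phrased for automorphisms.
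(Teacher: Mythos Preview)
Your proof is correct and close in spirit to the paper's, but the organization differs in a way worth noting. The paper first solves the compatibility relation $\Psi\circ\CD=\CD'\circ\Psi$ on $\sL[-1]^{\odot i}$ for $\mu'_i(\Psi^1_1(-),\ldots,\Psi^1_1(-))$, defining $\CD'$ iteratively, and only afterwards proves invertibility of $\Psi$ by recursively constructing the components of an inverse $\Phi$ from $(\Phi\circ\Psi)^1_i=\delta_{i1}\,\id$. You instead establish invertibility first via the word-length filtration (the associated graded acts as $(\Psi^1_1)^{\odot k}$), and then define $\CD':=\Psi\circ\CD\circ\Psi^{-1}$ by conjugation. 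Your ordering has a small payoff: conjugation by a coalgebra automorphism automatically yields a coderivation squaring to zero, so you get $(\CD')^2=0$ for free, whereas the paper's iterative definition of the $\mu'_i$ leaves this verification implicit. The recursive construction of the inverse is the same in both arguments, and your remark on local finiteness of the deconcatenation is a welcome clarification that the recursion is well-posed without finite-dimensionality assumptions.
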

\begin{proof}
 From the above considerations it is clear that the equation $\Psi\circ \CD=\CD'\circ \Psi$ applied to $\sL[-1]^{\odot i}$ can be solved for $\mu'_i(\Psi^1_1(\ell_1),\ldots, \Psi^1_1(\ell_i))$, which defines the higher products and thus the codifferential $\CD'$ on $\sL$ iteratively. It remains to show that $\Psi$ is invertible if $\Psi^1_1$ is. One readily sees that the composition of two morphisms $\Phi$ and $\Psi$ of $L_\infty$-algebras satisfies
 \begin{equation}
 \begin{aligned}
  (\Phi\circ \Psi)^1_1&=\Phi^1_1\circ \Psi^1_1~,\\
  (\Phi\circ \Psi)^1_i&=\Phi^1_i(\Psi^1_1(-),\ldots,\Psi^1_1(-))+R_i~,
 \end{aligned}
 \end{equation}
 where $R_i$ are terms involving $\Phi^1_j$ with $j<i$, schematically
$\Phi_j^i(\Psi_{i-j+1}^1(-,\ldots,-),\ldots,-)$. The morphism $\Psi$ is an automorphism if and only if there is a map $\Phi$ such that $(\Phi\circ \Psi)^1_1=\id$ and $(\Phi\circ \Psi)^1_i=0$ for $i>1$. If $\Psi^1_1$ is invertible, we can define $\Phi^1_1:=(\Psi^1_1)^{-1}$ and fix the $\Phi^1_i$ iteratively by putting $(\Phi\circ \Psi)^1_i=0$ for $i>1$. This yields the inverse to $\Psi$ and shows that $\Psi$ is an automorphism.
\end{proof}

\section{$L_\infty$-algebra of local observables on multisymplectic manifolds}\label{sec:shlalos}

\subsection{The Poisson algebra of a symplectic manifold}

Given a symplectic manifold $(M,\omega)$, we can associate to a smooth function $f\in \CC^\infty(M)$ a Hamiltonian vector field $X_f$ such that $\iota_{X_f}\omega=-\dd f$. This, in turn, allows us to introduce a Lie algebra structure on $\CC^\infty(M)$ by defining $\{f,g\}:=-\iota_{X_f}\iota_{X_g}\omega=:\pi(\dd f,\dd g)$, where $f,g\in \CC^\infty(M)$ and $\pi\in\Gamma(\wedge^2 TM)$ is a bivector field. This Lie algebra can be regarded as a central extension of the Lie algebra of Hamiltonian vector fields by the constant functions because $X_{\{f,g\}}=[X_f,X_g]_{TM}$. Together with the ordinary product of functions, $(\CC^\infty(M),\pi)$ forms a Poisson algebra, but since the higher analogue of this associative product remains opaque, we shall focus mostly on the Lie algebra structure.

Recall that a symplectomorphism between two symplectic manifolds $(M_1,\omega_1)$ and $(M_2,\omega_2)$ is a smooth map $\phi:M_1\rightarrow M_2$ such that $\omega_1=\phi^*\omega_2$. On the other hand, a Poisson map between two Poisson manifolds $(M_1,\pi_1)$ and $(M_2,\pi_2)$ is a smooth map $\phi:M_1\rightarrow M_2$ such that $\pi_1$ and $\pi_2$ are $\phi$-related\footnote{That is, $\dd\phi_x \pi_1(x)=\pi_2(\phi(x))$ for all $x\in M$.}. The above construction of a Poisson bivector from a symplectic form extends to a functor, embedding the category of symplectic manifolds into that of Poisson manifolds as a full subcategory. In particular, automorphisms between symplectic manifolds are evidently in one-to-one correspondence with automorphisms of Poisson manifolds. 

A Poisson algebra homomorphism $\psi:\CC^\infty(M)\rightarrow \CC^\infty(N)$ between the Poisson algebras of two Poisson manifolds is necessarily a homomorphism between the commutative algebras $\CC^\infty(M)$ and $\CC^\infty(N)$, cf.\ \cite{daSilva:1999aa}. All such homomorphisms are of the form $\psi=\phi^*$ for some smooth map $\phi:M\rightarrow N$. In particular, any Poisson algebra automorphisms $\psi$ is of the form $\psi=\phi^*$ for some diffeomorphism $\phi$. Given a Poisson manifold $(M,\pi)$ originating from a symplectic manifold $(M,\omega)$, we therefore have a commutative square
\begin{equation}
    \myxymatrix{
    (M,\omega) \ar@{->}[r] \ar@{->}[d]_{\phi} &
    (M,\pi) \ar@{->}[r] \ar@{->}[d]_{\phi} & (\CC^\infty(M),\{-,-\}_{\pi}) \ar@{->}[d]^{(\phi^{-1})^*}\\
    (M,\omega') \ar@{->}[r] & (M,\pi') \ar@{->}[r] & (\CC^\infty(M),\{-,-\}_{\pi'})
    }
\end{equation}
with $\omega'=(\phi^{-1})^*\omega$ and $\pi'=(\phi^{-1})^*\pi$.

\subsection{Local observables on multisymplectic manifolds}

Multisymplectic manifolds generalize the notion of symplectic manifolds.
\begin{definition}
 A \uline{multisymplectic manifold of degree $n+1$} is a smooth manifold endowed with a closed $n+1$-form $\varpi$ which is non-degenerate in the sense that $\iota_X\varpi=0$ for an $X\in \Gamma(TM)$ implies $X=0$. We also call such a manifold an \uline{$n$-plectic manifold}. 
\end{definition}
\noindent Canonical examples include ordinary symplectic manifolds, which are 1-plectic manifolds, and Lie groups, which are 2-plectic manifolds with the canonical left-invariant 3-form built from the Lie bracket and the Cartan-Killing form \cite{Baez:2009:aa}. Note that each orientable manifold of dimension $n$ comes with a natural $(n-1)$-plectic form given by its volume form.

\begin{definition}
 Given an $n$-plectic manifold $(M,\varpi)$, we call an element $\alpha$ of $\Omega^{n-1}(M)$ \uline{Hamiltonian} if there is a Hamiltonian vector field $X_\alpha\in\frX(M)$ such that
 \begin{equation}
  \dd\alpha=-\iota_{X_\alpha}\varpi~.
 \end{equation}
 The set of Hamiltonians and Hamiltonian vector fields will be denoted by $\Omega^{n-1}_{\rm Ham}(M)$ and $\frX_{\rm Ham}(M)$, respectively.
\end{definition}
Due to the non-degeneracy of $\varpi$, the Hamiltonian vector field is unique if it exists. Moreover, the Hamiltonian vector fields form a subalgebra of $\frX(M)$ as one readily verifies. For more details, see e.g.\ \cite{Fiorenza:1304.6292}.

The $L_\infty$-algebra of local observables is now defined as a central extension of the Lie algebra of Hamiltonian vector fields as follows \cite{Baez:2008bu,Baez:2009:aa,Rogers:2010nw}, see also \cite{Fiorenza:1304.6292}.
\begin{definition}\label{def:ordinary_shlalo}
 The \uline{strong homotopy Lie algebra of local observables (shlalo) of an $n$-plectic manifold $(M,\varpi)$} is given by the complex
 \begin{equation}
  \CC^\infty(M)\xrightarrow{~\dd~}\Omega^1(M)\xrightarrow{~\dd~}\ldots\xrightarrow{~\dd~}\Omega^{n-2}(M)\xrightarrow{~\dd~}\Omega^{n-1}_{\rm Ham}(M)
 \end{equation}
 with assigned degrees $1-n,\ldots,0$ and the only non-trivial products being
 \begin{equation}
  \mu_i(\alpha_1,\ldots,\alpha_i):=(-1)^{\binom{i+1}{2}}\iota_{X_{\alpha_1}}\ldots\iota_{X_{\alpha_i}}\varpi
 \end{equation}
 for $\alpha_1,\ldots,\alpha_n\in \Omega^{n-1}_{\rm Ham}(M)$.
\end{definition}
 
In the case of a 2-plectic manifold $(M,\varpi)$, for example, we obtain the 2-term $L_\infty$-algebra $\CC^\infty(M)\rightarrow \Omega^1_{\rm Ham}(M)$ with non-trivial products
\begin{equation}
 \mu_1(f):=\dd f~,~~~\mu_2(\alpha_1,\alpha_2):=-\iota_{X_{\alpha_1}}\iota_{X_{\alpha_2}}\varpi~,~~~\mu_3(\alpha_1,\alpha_2,\alpha_3)=\iota_{X_{\alpha_1}}\iota_{X_{\alpha_2}}\iota_{X_{\alpha_3}}\varpi
\end{equation}
for $f\in \CC^\infty(M)$ and $\alpha_i\in \Omega^1_{\rm Ham}(M)$. The explicit example of a Lie group $\sG$ with canonical 2-plectic form has been discussed in \cite{Baez:2009:aa}.

It is now clear that a diffeomorphism on $M$ induces a strict Lie 2-algebra isomorphism on the above shlalo. However, let us apply the automorphism of $L_\infty$-algebras used e.g.\ in \cite[Section 5.3]{Ritter:2013wpa} with $\Psi_1=\id$ and $\Psi_2(\alpha_1,\alpha_2):=\iota_{X_{\alpha_1}}\alpha_2-\iota_{X_{\alpha_2}}\alpha_1$. This yields a new 2-term $L_\infty$-algebra $\sL'$ with products $\mu'_i$ and we have in particular
\begin{equation}
 \mu'_2(f,\alpha)=\iota_{X_\alpha}\dd f~.
\end{equation}
Clearly, $\sL'$ is not the $L_\infty$-algebra of local observables of a 2-plectic manifold, because for such shlalos, $\mu_2(f,\alpha)$ always vanishes.

\subsection{Generalized construction}

Let us now extend the association of a shlalo to an $n$-plectic manifold $(M,\varpi)$ such that the association covers the image of automorphisms of shlalos. For this, we need to enhance the input data from an $n$-plectic manifold to an $n$-plectic manifold with additional structure.

First, let us slightly constrain the $L_\infty$-algebra automorphisms that we will admit on shlalos. Just as in the case of Poisson algebras on Poisson manifolds, we would like to respect the natural associative product on smooth functions as well as the product between smooth functions and $n$-forms. This implies that the chain map encoded in $\Psi^1_1$ is identified with the pullbacks along a diffeomorphism $\phi$ on $M$, $\Psi^1_1=\phi^*$.
\begin{definition}
 An \uline{automorphism of shlalos} is an automorphism $\Psi$ of $L_\infty$-algebras with $\Psi^1_1=\phi^*$, where $\phi$ is a diffeomorphism on the underlying manifold $M$. A \uline{strict automor\-phism of shlalos} is an automorphism $\Psi$ of $L_\infty$-algebras with $\Psi^1_i=0$ for $i>1$.
\end{definition}
\noindent We have an immediate implication for strict automorphisms.
\begin{proposition}
 Consider the shlalo $\sL$ of a 2-plectic manifold $(M,\varpi)$. A strict automorphism of shlalos $\Psi$ with $\Psi^1_1=\phi^*$ for some diffeomorphism $\phi:M\rightarrow M$ turns $\sL$ into the shlalo of $(M,\phi^*\varpi)$.
\end{proposition}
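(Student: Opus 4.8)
The plan is to use the lemma on automorphisms proved above: a strict automorphism $\Psi\colon(\sL,\CD)\to(\sL,\CD')$ has all components $\Psi^1_i=0$ for $i>1$, so the relation $\Psi\circ\CD=\CD'\circ\Psi$ restricted to $\sL[-1]^{\odot i}$ reads $\mu'_i(\phi^*\ell_1,\dots,\phi^*\ell_i)=\phi^*\!\bigl(\mu_i(\ell_1,\dots,\ell_i)\bigr)$ and, since $\phi^*$ is surjective, this determines the target products $\mu'_i$, hence $\CD'$, uniquely from $\CD$ and $\phi^*$. It therefore suffices to exhibit \emph{one} $L_\infty$-structure on $\sL$ admitting $\phi^*$ as a strict isomorphism from $\sL$, and to recognize it as the shlalo of $(M,\phi^*\varpi)$; by the uniqueness just noted it must equal $(\sL,\CD')$. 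As a preliminary observation, $\varpi':=\phi^*\varpi$ is closed, and since pullback along a diffeomorphism intertwines interior products it is again non-degenerate ($\iota_Y\varpi'=\phi^*(\iota_{(\phi^{-1})_*Y}\varpi)$ vanishes only for $Y=0$); thus $(M,\varpi')$ is a 2-plectic manifold and its shlalo is defined.

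The key input is the naturality of Hamiltonian vector fields under pullback. For a diffeomorphism $\phi$ and a vector field $X$ one has $\phi^*(\iota_X\beta)=\iota_{\phi^*X}(\phi^*\beta)$, where $\phi^*X$ denotes the pullback vector field. Applying $\phi^*$ to $\dd\alpha=-\iota_{X_\alpha}\varpi$ and using that $\phi^*$ commutes with $\dd$ gives $\dd(\phi^*\alpha)=-\iota_{\phi^*X_\alpha}\varpi'$, so $\phi^*\alpha$ is Hamiltonian for $\varpi'$ with Hamiltonian vector field $\phi^*X_\alpha$ (unique by non-degeneracy of $\varpi'$). Hence $\phi^*$ restricts to a linear bijection $\Omega^1_{\rm Ham}(M,\varpi)\to\Omega^1_{\rm Ham}(M,\varpi')$ intertwining $\alpha\mapsto X_\alpha$ with $\phi^*\alpha\mapsto\phi^*X_\alpha$. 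Since by hypothesis $\Psi$ and $\Psi^{-1}$ are well-defined endomorphisms of $\sL$, the map $\phi^*$ preserves the complex underlying $\sL$; together with the bijection above this forces $\Omega^1_{\rm Ham}(M,\varpi')=\Omega^1_{\rm Ham}(M,\varpi)$, so the shlalo of $(M,\varpi')$ genuinely lives on the same graded vector space as $\sL$.

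It remains to check the strict-morphism relations. For $i=1$ this is $\dd(\phi^*f)=\phi^*(\dd f)$. For $i=2,3$ one inserts the explicit products of Definition \ref{def:ordinary_shlalo}, uses the intertwining property $\phi^*\alpha_j\mapsto\phi^*X_{\alpha_j}$ from the previous paragraph and the identity $\phi^*(\iota_A\iota_B\beta)=\iota_{\phi^*A}\iota_{\phi^*B}(\phi^*\beta)$, e.g.
\begin{equation}
 \mu'_2(\phi^*\alpha_1,\phi^*\alpha_2)=-\iota_{\phi^*X_{\alpha_1}}\iota_{\phi^*X_{\alpha_2}}(\phi^*\varpi)=\phi^*\!\bigl(-\iota_{X_{\alpha_1}}\iota_{X_{\alpha_2}}\varpi\bigr)=\phi^*\!\bigl(\mu_2(\alpha_1,\alpha_2)\bigr)~,
\end{equation}
and analogously for $\mu_3$ (with sign $(-1)^{\binom{4}{2}}=+1$); the mixed product $\mu_2(f,\alpha)$ and all $\mu_i$ with $i\geq 4$ vanish on both sides. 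Thus $\phi^*$ is an invertible strict morphism of $L_\infty$-algebras from $\sL$ onto the shlalo of $(M,\varpi')$, and by the uniqueness of the induced codifferential we conclude that $(\sL,\CD')$ is the shlalo of $(M,\phi^*\varpi)$.

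I expect the only genuinely delicate point to be the bookkeeping of the space $\Omega^1_{\rm Ham}$ of Hamiltonian one-forms — verifying that it is literally preserved by $\phi^*$ (not merely mapped onto an abstractly isomorphic subspace) once $\Psi$ is assumed to be a bona fide automorphism of $\sL$. Everything else is the routine naturality of the Cartan calculus under diffeomorphisms, together with matching the signs $(-1)^{\binom{i+1}{2}}$ in the products $\mu_i$.
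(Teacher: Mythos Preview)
Your proof is correct. The paper itself does not supply a proof of this proposition: it is stated immediately after the definition of (strict) automorphisms of shlalos with the remark that it is an ``immediate implication,'' and no argument is given. Your write-up therefore fills in what the paper leaves to the reader.

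The content of your argument is exactly what one would expect the implicit reasoning to be: the naturality identities $\phi^*\circ\dd=\dd\circ\phi^*$ and $\phi^*(\iota_X\beta)=\iota_{\phi^*X}(\phi^*\beta)$ transport the defining equation $\dd\alpha=-\iota_{X_\alpha}\varpi$ to $\dd(\phi^*\alpha)=-\iota_{\phi^*X_\alpha}(\phi^*\varpi)$, and the strictness of $\Psi$ forces $\mu'_i(\phi^*\ell_1,\dots,\phi^*\ell_i)=\phi^*\mu_i(\ell_1,\dots,\ell_i)$, which matches the products of the shlalo of $(M,\phi^*\varpi)$. You are also right that the only point requiring a moment's thought is the equality $\Omega^1_{\rm Ham}(M,\varpi)=\Omega^1_{\rm Ham}(M,\phi^*\varpi)$ of the underlying degree-$0$ spaces; your observation that $\phi^*(\Omega^1_{\rm Ham}(M,\varpi))$ equals both $\Omega^1_{\rm Ham}(M,\phi^*\varpi)$ (by naturality) and $\Omega^1_{\rm Ham}(M,\varpi)$ (since $\Psi$ is an automorphism of $\sL$, hence preserves the underlying complex) settles it cleanly. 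The paper does not isolate this point; you go beyond what is written there.
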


Let us now generalize the picture to extend the above proposition to automorphisms of shlalos which are not strict. For this, we have to endow $M$ with some additional structure, which transforms under the $\Psi^1_k$. 
\begin{definition}
 An \uline{($n$-)enhanced $n$-plectic manifold} $(M,\varpi,\langle-,\ldots,-\rangle_i)$ is an $n$-plectic manifold endowed with a set of graded antisymmetric brackets $\langle-,\ldots,-\rangle_i:(\Omega^{<n})^{\wedge i}\rightarrow \Omega^{<n}$ of degree $1-i$ for $2\leq i$.
\end{definition}
An important example for an enhanced $n$-plectic manifold is now the following.
\begin{example}\label{ex:3d-Poisson}
Given a three-dimensional Poisson manifold with a choice of volume form, $(M,\pi,\omega_{\rm vol})$, we obtain an enhanced 2-plectic manifold by defining the 2-plectic form and the bracket $\langle-,-\rangle:\Omega^1(M)\wedge \Omega^1(M)\rightarrow \CC^\infty(M)$ according to
\begin{equation}
 \varpi:=\dd \mu_{\rm vol}\eand \langle\alpha,\beta\rangle:=\pi(\alpha,\beta)~.
\end{equation}
\end{example}
\noindent We will return to this and similar examples later.

Note that there is a bijection between $L_\infty$-algebra morphisms $\Phi_M$ with $(\Phi_M)^1_1=\id$ and the bracket structure on an enhanced $n$-plectic manifold $M$, given by
\begin{equation}
 (\Phi_M)^1_1:=\id~,~~~(\Phi_M)^1_i:=\langle-,\ldots,-\rangle_i~.
\end{equation}
This allows us to define the action of an automorphism on an enhanced $n$-plectic manifold.
\begin{definition}
 An \uline{automorphism on an enhanced $n$-plectic manifold} 
 \begin{equation}
  \Psi:(M,\varpi,\langle-,\ldots,-\rangle_i)\rightarrow(M,\varpi',\langle-,\ldots,-\rangle'_i)
 \end{equation}
 is a diffeomorphism $\phi:M\rightarrow M$ together with a set of maps $\Psi_k:(\Omega^{<n})^{\wedge k}\rightarrow \Omega^{<n}$ of degree $1-k$ such that
 \begin{equation}
  \omega=\phi^*\omega\eand \langle-,\ldots,-\rangle'_i=(\Psi_k\circ\Phi_M)_i\circ (\phi^*\wedge \ldots \wedge \phi^*)~.
 \end{equation}
\end{definition}
Since the automorphism acts in the same way as in $\CatLinftyAlg$, composition of automorphisms is associative.

\begin{definition}
 The \uline{shlalo of an enhanced $n$-plectic manifold} $(M,\varpi,\langle-,\ldots,-\rangle_i)$ is the $L_\infty$-algebra obtained by applying the automorphism $\Psi$ of $L_\infty$-algebras with $\Psi^1_1=\id$ and $\Psi^1_k=\langle-,\ldots,-\rangle_k$ to the shlalo of the $n$-plectic manifold $(M,\varpi)$ from definition \ref{def:ordinary_shlalo}.
\end{definition}

Let us consider again the example $n=2$. Here, an enhanced 2-plectic manifold $M$ comes with a 2-plectic form $\varpi$ and an antisymmetric map $\langle-,-\rangle:\Omega^1\times \Omega^1\rightarrow \CC^\infty(M)$. The resulting shlalo reads as
\begin{equation}\label{eq:example-extended-shlalo-2-plectic}
\begin{aligned}
 \mu_1(f):=\dd f~,~~~\mu_2(\alpha_1,\alpha_2):=-\iota_{X_{\alpha_1}}\iota_{X_{\alpha_2}}\varpi-\dd\langle\alpha_1,\alpha_2\rangle~,~~~
 \mu_2(\alpha,f):=-\langle\alpha,\dd f\rangle~,\\
 \mu_3(\alpha_1,\alpha_2,\alpha_3):=\iota_{X_{\alpha_1}}\iota_{X_{\alpha_2}}\iota_{X_{\alpha_3}}\varpi-\big(\langle\alpha_1,\iota_{X_{\alpha_2}}\iota_{X_{\alpha_3}}\varpi\rangle+\mu_2(\alpha_1,\langle\alpha_2,\alpha_3\rangle)+\rm{cycl.}\big)~.
\end{aligned}
\end{equation}

We now have obviously the following statement, which was our first goal:
\begin{proposition}
 The application of an automorphism $\Psi$ of shlalos on the shlalo of an enhanced $n$-plectic manifold yields the shlalo of another enhanced $n$-plectic manifold.
\end{proposition}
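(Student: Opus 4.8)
The plan is to reduce the statement to two facts that are, in effect, already in hand: first, that the shlalo of an enhanced $n$-plectic manifold is by \emph{definition} obtained from the shlalo of an ordinary $n$-plectic manifold by applying an $L_\infty$-automorphism $\Phi_M$ with $(\Phi_M)^1_1=\id$ and $(\Phi_M)^1_k=\langle-,\ldots,-\rangle_k$; and second, that automorphisms of shlalos are closed under composition and always have $\Psi^1_1=\phi^*$ for a diffeomorphism $\phi$. Given an enhanced $n$-plectic manifold $(M,\varpi,\langle-,\ldots,-\rangle_i)$ and an automorphism of shlalos $\Psi$, I would form the composite $\Psi\circ\Phi_M$, which is again an automorphism of $L_\infty$-algebras acting on the shlalo of $(M,\varpi)$, with $(\Psi\circ\Phi_M)^1_1=\phi^*\circ\id=\phi^*$.

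The key step is then to exhibit this composite as $\Psi'\circ\Phi_{M'}$ for a suitable enhanced $n$-plectic manifold $M'$ and an automorphism of shlalos $\Psi'$ with $(\Psi')^1_1=\phi^*$. I would take $\varpi':=\phi^*\varpi$ as the new $n$-plectic form — noting it is still closed and non-degenerate since $\phi$ is a diffeomorphism — and I would \emph{define} the new brackets $\langle-,\ldots,-\rangle'_i$ to be exactly those produced by the automorphism-on-enhanced-$n$-plectic-manifolds prescription already spelled out in the excerpt, namely $\langle-,\ldots,-\rangle'_i:=(\Psi\circ\Phi_M)_i\circ(\phi^*\wedge\cdots\wedge\phi^*)$. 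With this choice $M':=(M,\varpi',\langle-,\ldots,-\rangle'_i)$ is a bona fide enhanced $n$-plectic manifold, and $\Phi_{M'}$ is the associated $L_\infty$-automorphism with $(\Phi_{M'})^1_1=\id$ and $(\Phi_{M'})^1_i=\langle-,\ldots,-\rangle'_i$.

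What remains is to check that the shlalo of $M'$ — that is, $\Phi_{M'}$ applied to the shlalo of $(M,\varpi')$ — coincides with $\Psi$ applied to the shlalo of $M$ — that is, $\Psi\circ\Phi_M$ applied to the shlalo of $(M,\varpi)$. The strict part of this is the content of the proposition on strict automorphisms already recorded above: the strict automorphism with $\Psi^1_1=\phi^*$ carries the shlalo of $(M,\varpi)$ to the shlalo of $(M,\phi^*\varpi)$, so the underlying "$\Phi=\id$" layers match. The non-strict layers match by construction, since the brackets of $M'$ were defined precisely as the higher components of $\Psi\circ\Phi_M$ conjugated by $\phi^*$; unwinding the composition law for morphisms, $(\Phi\circ\Psi)^1_1=\Phi^1_1\circ\Psi^1_1$ and $(\Phi\circ\Psi)^1_i=\Phi^1_i(\Psi^1_1(-),\ldots)+R_i$, shows that $\Phi_{M'}$ acting after the strict $\phi^*$-automorphism reproduces $\Psi\circ\Phi_M$ exactly.

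The main obstacle is purely bookkeeping: keeping the domains of the brackets (forms of degree $<n$) consistent under pullback, and tracking the Koszul signs of degree-$(1-i)$ maps through the composition so that $(\Psi\circ\Phi_M)^1_i$ really does factor as $(\Phi_{M'})^1_i\circ(\phi^*\wedge\cdots\wedge\phi^*)$ composed with the strict piece. Once the definitions are lined up, no genuine computation is needed — the statement is, as the excerpt says, "obvious" — but care is required to verify that $\langle-,\ldots,-\rangle'_i$ indeed lands in $\Omega^{<n}$ and is graded antisymmetric of the correct degree, which follows because $\phi^*$ preserves form degree and $\Psi$, $\Phi_M$ are $L_\infty$-morphisms.
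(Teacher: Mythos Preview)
Your proposal is correct and matches the paper's approach exactly: the paper gives no proof beyond the phrase ``we now have obviously the following statement,'' since the definitions of enhanced $n$-plectic manifold, automorphism on an enhanced $n$-plectic manifold, and shlalo of an enhanced $n$-plectic manifold were engineered precisely so that this proposition holds by construction; your argument is a faithful unwinding of that tautology. One small caveat: the proposition on strict automorphisms you invoke is stated in the paper only for $2$-plectic manifolds, though its extension to general $n$ is immediate and implicitly used here.
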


Let us finally remark that our construction generalizes to $n$-enhanced manifolds, in which the $n$-plectic structure is taken to be
zero.
\begin{definition}
 An \uline{$n$-enhanced manifold} $(M,\langle-,\ldots,-\rangle_i)$ is a manifold endowed with a set of graded antisymmetric brackets $\langle-,\ldots,-\rangle_i:(\Omega^{<n})^{\wedge i}\rightarrow \Omega^{<n}$ of degree $1-i$ for $2\leq i$. The \uline{shlalo of an $n$-enhanced manifold} $(M,\langle-,\ldots,-\rangle_i)$, $i\leq n$, is the $L_\infty$-algebra obtained by applying the automorphism $\Psi$ of $L_\infty$-algebras with $\Psi^1_1=\id$ and $\Psi^1_k=\langle-,\ldots,-\rangle_k$ to the trivial $n$-term $L_\infty$-algebra 
 \begin{equation}
  \sL^0_M:=\CC^\infty(M)\xrightarrow{~\dd~}\Omega^1(M)\xrightarrow{~\dd~}\ldots\xrightarrow{~\dd~}\Omega^{n-1}_{\rm Ham}(M)
 \end{equation}
with all higher products $\mu_i=0$ for $i>1$.
\end{definition}
As we will show in the next section, many manifolds with structures arising in Mathematical Physics give rise to enhanced manifolds.

\subsection{Examples: Symplectic Lie $n$-algebroids}

Let us begin with examples of enhanced manifolds, neglecting the $n$-plectic structure. We started from the Poisson algebra associated to a symplectic manifold. It is well-known that symplectic manifolds can be regarded as symplectic Lie 0-algebroids, cf.\ \cite{Severa:2001aa} or \cite{Roytenberg:0203110}. Going up in the ladder of categorification, we arrive at symplectic Lie 1-algebroids, which can be identified with Poisson manifolds $(M,\pi)$.

As a Lie 1-algebroid, the Poisson manifold $(M,\pi)$ is given by the manifold $T[1]M$, which can be regarded as a 2-space. Recall that a smooth 2-space is a category internal to $\CatDiff$\footnote{the category of smooth manifolds and smooth maps between them}, that is, a category in which both objects and morphisms are smooth manifolds and the source, target, identity and composition maps are smooth. In the case of $T[1]M$, the objects are $M$ and the morphisms are $TM$. The source and target maps are both identified with the bundle projection and the identity map $\id:x\mapsto T_xM$ yields the zero vector in $T_xM$. Composition of morphisms is only defined for morphisms over the same base point: $(x,a)\circ (x,b):=(x,a+b)$.

Just as expected, we are now quantizing a 2-space instead of a
1-space. Moreover, the functions with values of degree $0$ are indeed given by
$\CC^\infty(M)\oplus \Gamma(T^*[1]M)$. Here, elements of $\Gamma(T^*[1]M)\cong \Omega^1(M)$ are one-forms with grading -1, compensating the grading +1 of the vector fields. More explicitly in local coordinates $x^i$ on $M$ and $\xi^i$ in the fibers of $T[1]M$, a function reads as $f(x)=f^{(0)}(x)+f^{(-1)}_i(x)\xi^i$, where $f^{(-1)}_i$ can be identified with the components of the one-form $f^{(-1)}_i\dd x^i$ with grading -1. This is the vector space underlying the 2-term $L_\infty$-algebra constructed in the following.

Regarding $(M,\pi)$ as an enhanced manifold, we obtained the 2-term $L_\infty$-algebra\linebreak $\CC^\infty(M)\rightarrow\Omega^1(M)$ with products
\begin{equation}
\begin{aligned}
 \mu_1(f):=\dd f~,~~~\mu_2(\alpha_1,\alpha_2):=-\dd\pi(\alpha_1,\alpha_2)~,~~~
 \mu_2(\alpha,f):=-\pi(\alpha,\dd f)~,\\
 \mu_3(\alpha_1,\alpha_2,\alpha_3):=\pi(\alpha_1,\dd\pi(\alpha_2,\alpha_3))+\pi(\alpha_2,\dd\pi(\alpha_3,\alpha_1))+\pi(\alpha_3,\dd\pi(\alpha_1,\alpha_2))~.
\end{aligned}
\end{equation}
Note that even though $\pi$ is a Poisson tensor, $\mu_3(\alpha_1,\alpha_2,\alpha_3)$ does not necessarily vanish. Interestingly, the above products respect the associative product between functions in the sense that
\begin{equation}
 \mu_1(fg)=f\mu_1(g)+\mu_1(f)g\eand \mu_2(\alpha,fg)=f\mu_2(\alpha,g)+\mu_2(\alpha,f)g~.
\end{equation}
Unfortunately, this does not extend to the associative product between functions and one-forms. However, upon closer inspection it trivially extends to the shlalos of enhanced manifolds, if the brackets are given by multivector fields.
\begin{proposition}
 Let $(M,\langle-,\ldots,-\rangle_i)$ be an $n$-enhanced manifold with or without $n$-plectic structure for which the brackets are encoded in multivector fields, i.e.\ $\langle-,\ldots,-\rangle_i=\xi_{i,q}(-,\ldots,-)$, $\xi_{i,q}\in \Gamma(\wedge^q TM)$. Then the corresponding shlalo respects the associative product on $\CC^\infty(M)$ in the following sense:
 \begin{equation}
  \mu_n(A_1,\ldots,fg,\ldots,A_n)=f\mu_n(A_1,\ldots,g,\ldots,A_n)+\mu_n(A_1,\ldots,f,\ldots,A_n)g~,
 \end{equation}
 where $A_i\in \Omega^{<p}$.
\end{proposition}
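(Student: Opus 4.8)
The plan is to trace a single function argument through the explicit formula for the products of the shlalo and to observe that it can enter a nonzero term only through the exterior derivative, after which every operation acting on it is $\CC^\infty(M)$-linear. Recall that the products of the shlalo of $(M,\langle-,\ldots,-\rangle_i)$ are obtained by applying the automorphism $\Psi$ with $\Psi^1_1=\id$, $\Psi^1_j=\langle-,\ldots,-\rangle_j$ to the base $L_\infty$-algebra, whose only nonzero products are $\mu_1=\dd$ and, when the $n$-plectic form does not vanish, the $\varpi$-contractions $(-1)^{\binom{j+1}{2}}\iota_{X_{\alpha_1}}\cdots\iota_{X_{\alpha_j}}\varpi$ defined for Hamiltonian $(n-1)$-forms $\alpha_i$. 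Solving $\Psi\circ\CD=\CD'\circ\Psi$ on $\sL[-1]^{\odot k}$ as in section~\ref{ssec:sh-morphisms} expresses each product $\mu_k$ of the shlalo as a finite signed sum of composites of $\dd$, the $\varpi$-contractions, and the brackets $\langle-,\ldots,-\rangle_j$, applied to the arguments and their differentials.

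The central step --- the ``closer inspection'' --- is that a function argument $h\in\CC^\infty(M)=\Omega^0$ can contribute to such a composite only as $\mu_1(h)=\dd h$. On the one hand, the $\varpi$-contractions are defined only on Hamiltonian $(n-1)$-forms and hence vanish on any tuple containing a function, which disposes of all $\varpi$-dependent terms (for $n\geq 2$; the case $n=1$ is the classical Poisson algebra, where the claim is the Leibniz rule for the Poisson bracket). On the other hand, a bracket encoded by a $q$-vector field, $\langle-,\ldots,-\rangle_j=\iota_{\xi_{j,q}}(-\wedge\cdots\wedge-)$, can carry the prescribed degree $1-j$ only for $q=(j-1)n$; but then, as soon as one of its $j$ slots holds a $0$-form, the argument $\alpha_1\wedge\cdots\wedge\alpha_j$ has form degree at most $(j-1)(n-1)<q$, so the contraction --- and with it the bracket --- vanishes. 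Thus in every surviving term of $\mu_k$ a function argument is first acted on by $\mu_1$ alone, producing the one-form $\dd h$, and afterwards it is only wedged with fixed forms and contracted with fixed (multi)vector fields, which are $\CC^\infty(M)$-linear operations.

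A short degree count then reduces the statement to two essentially trivial cases. Since $\mu_k$ has degree $2-k$ while a $0$-form slot contributes the minimal degree $1-n$, one checks that $\mu_k(A_1,\ldots,fg,\ldots,A_k)$ vanishes identically for $k\geq 3$, and that for $k=2$ it is nonzero only when the remaining slot is a form $\beta$ of degree $n-1$, which sits in the top degree of the complex, where $\mu_1(\beta)=0$. For $k=1$ the claim is just $\mu_1(fg)=\dd(fg)=f\,\dd g+g\,\dd f$. For $k=2$, by the previous paragraph all terms containing $h$ undifferentiated drop out and the formula for $\mu_2$ collapses to $\mu_2(\beta,h)=\pm\langle\beta,\dd h\rangle_2=\pm\,\iota_{\xi_{2,n}}(\beta\wedge\dd h)$, with no subsequent exterior derivative. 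Inserting $\dd(fg)=f\,\dd g+g\,\dd f$ and using the $\CC^\infty(M)$-linearity of $\wedge$ and $\iota_{\xi_{2,n}}$ together with the trivial grading of functions gives $\iota_{\xi_{2,n}}(\beta\wedge f\,\dd g)=f\,\iota_{\xi_{2,n}}(\beta\wedge\dd g)$, and likewise with $f\leftrightarrow g$, so that $\mu_2(\beta,fg)=f\,\mu_2(\beta,g)+\mu_2(\beta,f)\,g$.

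The entire content of the proof, and hence the only real obstacle, lies in the bookkeeping of the second and third paragraphs: one must verify that the multivector-field nature of the brackets together with the degree constraints intrinsic to an $n$-term $L_\infty$-algebra forbid a function argument from ever being differentiated more than once. Once this is in place, the Leibniz identity is exactly the elementary $\CC^\infty(M)$-linearity statement just displayed --- which is why the extension is ``trivial'' in the sense claimed above.
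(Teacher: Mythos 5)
Your proof is correct and follows essentially the same route as the paper's: identify that a function argument can only enter a nonzero product via $\mu_1(f)=\dd f$, so that the only products to check are $\mu_1$ and $\mu_2(\beta,f)=\pm\,\xi_{2,n}(\beta\wedge\dd f)$, where the Leibniz rule for $\dd$ and the $\CC^\infty(M)$-linearity of the contraction finish the argument. The paper simply asserts that after enhancement these are the only non-trivial products involving functions, whereas you supply the degree-counting and form-degree bookkeeping that justifies this assertion, so your write-up is a more detailed version of the same proof.
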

\begin{proof}
 First, note that $\mu_i(-,\ldots,f,\ldots,-)=0$ for $i>1$ in the shlalo for an ordinary
 $n$-plectic manifold. Enhancing the manifold, the only non-trivial products involving functions are $\mu_1(f):=\dd f$ and
 $\mu_2(A,f):=\xi_{2,p}(A,\dd f)$, where $A\in \Omega^{p-1}$ and
 $\xi_{2,n}\in \Gamma(\wedge^p TM)$. For both products, the statement clearly holds.
\end{proof}

Note that automorphisms of Lie 1-algebroids have to preserve the fact that $\langle-,-\rangle$ originates from a Poisson bivector. Correspondingly, the general automorphisms of an $n$-enhanced manifold restrict to a subset. This is to be expected, as more structure is being preserved. 

\

Next in the hierarchy of symplectic Lie $n$-algebroids is the case $n=2$, which corresponds to a Courant algebroid.
\begin{definition}
 A \uline{Courant algebroid structure} on a vector bundle $E\rightarrow M$ over a smooth manifold $M$ is given by a non-degenerate bilinear form $(-,-):\Gamma(E)\times \Gamma(E)\rightarrow \CC^\infty(M)$, an antisymmetric bracket $\lsb-,-\rsb:\Gamma(E)\wedge\Gamma(E)\rightarrow \Gamma(E)$ and a bundle map $\rho:E\rightarrow TM$ called the \uline{anchor map}. From these maps, we derive the following additional maps:
 \begin{equation}
 \begin{aligned}
  J(e_1,e_2,e_3)&:=\lsb\lsb e_1,e_2\rsb,e_3\rsb+\lsb\lsb e_2,e_3\rsb,e_1\rsb+\lsb\lsb e_3,e_1\rsb,e_2\rsb~,\\
  T(e_1,e_2,e_3)&:=-\tfrac13 \left(\big(\lsb e_1,e_2\rsb,e_3\big)+\big(\lsb e_2,e_3\rsb,e_1\big)+\big(\lsb e_3,e_1\rsb,e_2\big)\right)~,\\
  \big( \CD f,e\big) &:=\tfrac12 \rho(e)f
 \end{aligned}
 \end{equation}
 for $e,e_{1,2,3}\in \Gamma(E)$ and $f\in\CC^\infty(M)$, which have to satisfy the following properties
 \begin{myitemize}
  \item[(i)] $\rho\lsb e_1,e_2\rsb=[\rho(e_1),\rho(e_2)]$;
  \item[(ii)] $-J(e_1,e_2,e_3)=\CD T(e_1,e_2,e_3)$
  \item[(iii)] $\lsb e_1, f e_2\rsb=f\lsb e_1,e_2\rsb+(\rho(e_1)f)e_2-\big( e_1,e_2\big) \CD f$;
  \item[(iv)] $\big(\CD f_1,\CD f_2\big) =0$;
  \item[(v)] $\rho(e_1)\big( e_2,e_3\big)=\big(\lsb e_1,e_2\rsb+\CD\big( e_1,e_2\big),e_3\big)+\big(e_2,\lsb e_1,e_3\rsb+\CD\big( e_1,e_3\big)\big)$
 \end{myitemize}
 again for all $e,e_{1,2,3}\in \Gamma(E)$ and $f,f_{1,2}\in\CC^\infty(M)$.
\end{definition}
An important example is now the following.
\begin{example}\label{ex:exact_Courant_algebroid}
Given a manifold $M$ endowed with a closed 3-form $\varpi$, we have an exact\footnote{i.e.\ there is an exact sequence of vector bundles $T^*M\xrightarrow{~\rho^*~}TM\oplus T^*M\xrightarrow{~\rho~} TM$} Courant algebroid given by the vector bundle $E=TM\oplus T^*M$ together with the maps
\begin{equation}
\begin{aligned}
 \rho(X+\alpha)&:=X~,\\
 (X+\alpha,Y+\beta)&:=\tfrac12(\iota_X \beta+\iota_Y\alpha)~,\\
 \lsb X+\alpha,Y+\beta\rsb&:=[X,Y]+\CL_X\beta-\CL_Y \alpha+\tfrac12(\dd \iota_Y \alpha-\dd \iota_X \beta)+\iota_X\iota_Y\varpi~,
\end{aligned}
\end{equation}
where $X,Y\in \frX(M)$ and $\alpha,\beta\in \Omega^1(M)$. As a consequence, we have $\CD f=0+\dd f$ for $f\in \CC^\infty(M)$.
\end{example}
Note that any Courant algebroid $E\xrightarrow{\rho} TM\rightarrow M$ can be regarded as a 3-space (a strict 2-category internal to $\CatDiff$) $E\rightrightarrows TM\rightrightarrows M$. The source and target maps of 2-morphisms $E\rightarrow TM$ are identified with the anchor map, and the source and target maps of 1-morphisms are the bundle projection. Since the algebra of functions we are going to be interested in is given by $\CC^\infty(M)\times \Omega^1(M)\times \Omega^2(M)$, we choose $E=\wedge^2 TM\oplus TM$. In the special case of an orientable manifold $M$ of dimension $3$, we can use the volume form to identify $\wedge^2 TM$ with $T^*M$ and $E=TM\oplus T^*M$ agrees at least as a vector bundle with the Courant algebroid of example \ref{ex:exact_Courant_algebroid}.

The functions of degree $0$ on this Courant algebroid are given by 
\begin{equation}
 \Gamma(\wedge^2 T^*[1]M\oplus T^*[1]M)\oplus \CC^\infty(M)\cong\Omega^2(M)\oplus \Omega^1(M)\oplus \CC^\infty(M)~.
\end{equation}
Recall that a Courant algebroid is defined by a pseudo-Euclidean inner product $(-,-)$ on $E$, an anchor map $\rho:E\rightarrow TM$ and a bracket $\lsb-,-\rsb:\Gamma(E)\wedge \Gamma(E)\rightarrow \Gamma(E)$. We then have the following natural choices for higher brackets:
\begin{equation}
 \begin{aligned}
  &\langle-,-\rangle~:~\Omega^2(M)\times \Omega^2(M)\rightarrow \Omega^1(M)~,~~~&\langle\beta_1,\beta_2\rangle&:=(\rho(\lsb\beta_1^\sharp,\beta_2^\sharp\rsb))^\flat~,\\
  &\langle-,-\rangle~:~\Omega^1(M)\times \Omega^2(M)\rightarrow \CC^\infty(M)~,~~~&\langle\alpha,\beta\rangle&:=( \alpha^\sharp,\beta^\sharp)~,\\
  &\langle-,-,-\rangle~:~\wedge^3\Omega^2\rightarrow \CC^\infty(M)~,~~~&\langle\beta_1,\beta_2,\beta_3\rangle&:=(\beta^\sharp_1,\lsb\beta^\sharp_2,\beta^\sharp_3\rsb)+{\rm cyclic}~,
 \end{aligned}
\end{equation}
where $\sharp$ and $\flat$ are the musical isomorphisms induced by the
pseudo-Euclidean structure $(-,-)$.

Alternatively, we could start from the exact Courant algebroid of example \ref{ex:exact_Courant_algebroid} and construct brackets from the contained maps.

In both cases, the resulting shlalo reads as
\begin{equation}
\begin{aligned}
 &\mu_1=\dd~,~~~\mu_2(\alpha,f)=0~,~~~\mu_2(\beta,f)=\langle \beta,\dd f\rangle~,~~~ \mu_2(\alpha,\beta)=\langle\dd \alpha,\beta\rangle-\dd \langle\alpha,\beta\rangle~,\\
 &\mu_2(\beta_1,\beta_2)=-\dd\langle\beta_1,\beta_2\rangle~,\\
 &\mu_3(\alpha,\beta_1,\beta_2)=-\langle\dd\langle\alpha,\beta_1\rangle,\beta_2\rangle+\dd\langle\langle\alpha,\beta_1\rangle,\beta_2\rangle-\langle\alpha,\dd\langle\beta_1,\beta_2\rangle\rangle+\langle\dd \alpha,\beta_1,\beta_2\rangle+\ldots~,\\
 &\mu_3(\beta_1,\beta_2,\beta_3)=-\langle\dd\langle\beta_1,\beta_2\rangle,\beta_3\rangle-\dd \langle\beta_1,\beta_2,\beta_3\rangle+\ldots~,
\end{aligned}
\end{equation}
where the $\ldots$ indicate additional terms arising from antisymmetrization of the arguments.

\subsection{Examples: Nambu-Poisson manifolds}

The archetypical example of an enhanced $2$-plectic manifold is a Poisson manifold. There is the following generalization \cite{Nambu:1973qe,Takhtajan:1993vr}, which is relevant to physics, cf.\ \cite{DeBellis:2010pf}:
\begin{definition}
 A \uline{Nambu-Poisson manifold} $(M,\pi)$ is a manifold with a multivector field $\pi\in \Gamma(\wedge^n TM)$ such that the {Nambu-Poisson bracket}
\begin{equation}
 \{f_1,\ldots,f_n\}=\pi(\dd f_1,\ldots,\dd f_n)~,~~~f_i\in \CC^\infty(M),
\end{equation}
satisfies the {\em fundamental identity}
\begin{equation}\label{eq:fundamental_identity}
\begin{aligned}
 \{f_1,\ldots,f_{n-1},\{g_1,\ldots,g_n\}\}=&\\
 \{\{f_1,\ldots,f_{n-1},&g_1\},\ldots,g_n\}\}+\ldots+ \{g_1,\ldots,\{f_1,\ldots,f_{n-1},g_n\}\}~.
\end{aligned}
\end{equation}
\end{definition}
Note that a manifold can carry many Nambu-Poisson brackets simultaneously. We can then use the resulting multivector fields to define bracket structures on these manifolds.

The example one step up from a 2-plectic Poisson manifold is a 3-plectic Nambu-Poisson manifold $(M,\varpi)$ with Nambu-Poisson tensor $\pi$ of rank $3$. A natural bracket is then defined as
\begin{equation}
 \langle-,-\rangle: \Omega^1\wedge \Omega^2\rightarrow \CC^\infty(M)~,~~~\langle \alpha,\beta\rangle=\pi^{ijk}\alpha_{ij}\beta_k~,~~~\alpha\in \Omega^2(M)~,~~\beta\in\Omega^1(M)~.
\end{equation}
The resulting shlalo of this enhanced 3-plectic manifold is based on the complex
\begin{equation}
 \CC^\infty(M)\rightarrow \Omega^1(M)\rightarrow \Omega^2_{\rm Ham}(M)
\end{equation}
and has non-trivial higher products
\begin{equation}
\begin{aligned}
 &\mu_1(f):=\dd f~,~~~\mu_1(\beta):=\dd \beta~,~~~
 \mu_2(\alpha_1,\alpha_2):=-\iota_{X_{\alpha_1}}\iota_{X_{\alpha_2}}\varpi~,\\
 &\mu_3(\alpha_1,\alpha_2,\alpha_3):=\iota_{X_{\alpha_1}}\iota_{X_{\alpha_2}}\iota_{X_{\alpha_3}}\varpi~,~~~\mu_4(\alpha_1,\alpha_2,\alpha_3,\alpha_4):=-\iota_{X_{\alpha_1}}\iota_{X_{\alpha_2}}\iota_{X_{\alpha_3}}\iota_{X_{\alpha_4}}\varpi~,\\
 &\mu_2(\alpha,f):=\langle \alpha,\dd f\rangle~,~~~\mu_2(\alpha,\beta):=\dd \langle \alpha,\beta\rangle~,\\
 &\mu_3(\alpha_1,\alpha_2,\beta):=-\mu_2(\alpha_1,\langle \alpha_2,\beta\rangle)+\mu_2(\alpha_2,\langle \alpha_1,\beta\rangle)+\langle \mu_2(\alpha_1,\alpha_2),\beta\rangle~
 \end{aligned}
\end{equation}
for $\alpha,\alpha_i\in \Omega^2_{\rm Ham}(M)$, $\beta,\beta_i\in \Omega^1(M)$ and $f\in \CC^\infty(M)$. This example readily generalizes to $n$-plectic Nambu-Poisson manifolds with Nambu-Poisson tensor $\pi$ of rank $n$.

\subsection{Shlalos and central extensions}

Recall that the ordinary Poisson algebra of a Poisson manifold can be regarded as a central extension of the Lie algebra of Hamiltonian vector fields on a Poisson manifold, where the central subalgebra corresponds to the constant functions. Analogously, the shlalo of an $n$-plectic manifold can be regarded as a central extension of the Lie algebra of Hamiltonian vector fields to a Lie 2-algebra, with the central part given by functions and constant 1-forms \cite{Fiorenza:1304.6292}.

Generalized shlalos of enhanced $n$-plectic manifolds clearly do not fit this picture in a straightforward manner any more. One could simply interpret such shlalos as an extension of the Lie algebra of Hamiltonian vector fields, which is not central. While we do not have a better answer in the general situation, there is a nice picture arising in the shlalo of a $2$-plectic manifold with bracket $\langle-,-\rangle:=\iota_{X_\alpha}\beta-\iota_{X_\beta}\alpha$. The corresponding $L_\infty$-algebra homomorphism was used e.g.\ in \cite{Rogers:2010sc} and \cite{Ritter:2013wpa}.

We extend the Lie algebra of Hamiltonian vector fields to the 2-vector space $\Pi$ given by $\Omega^1_{\rm Ham}(M)\xrightarrow{~0~}\frX_{\rm Ham}(M)$, where we associate a ``Hamiltonian one-form'' $\xi_f:=\dd f$ to each function. On this 2-vector space, we define non-trivial products
\begin{equation}\label{eq:central_brackets}
 \pi_2(X_\alpha,X_\beta)=[X_\alpha,X_\beta]~,~~~\pi_2(X_\alpha,\xi_f)=\dd(\iota_{X_\alpha}\xi_f)~.
\end{equation}
\begin{proposition}
 The 2-vector space $\Pi$ with higher products \eqref{eq:central_brackets} forms a Lie 2-algebra. Moreover, the shlalo of the $2$-plectic manifold with products
\begin{equation}
\begin{aligned}
 \mu_1(f)=\dd f~,~~~\mu_2(\alpha_1,\alpha_2)=-\iota_{X_{\alpha_1}}\iota_{X_{\alpha_2}}\varpi+\dd(\iota_{X_{\alpha_1}}\alpha_2-\iota_{X_{\alpha_2}}\alpha_1)~,~~~
 \mu_2(\alpha,f)=\iota_{X_\alpha}\dd f~,\\
 \mu_3(\alpha_1,\alpha_2,\alpha_3)=-2\iota_{X_{\alpha_1}}\iota_{X_{\alpha_2}}\iota_{X_{\alpha_3}}\varpi-\big(\iota_{[X_{\alpha_2},X_{\alpha_3}]}\alpha_1-\mu_2(\alpha_1,\iota_{X_{\alpha_2}}\alpha_3-\iota_{X_{\alpha_3}}\alpha_2)+\rm{cycl.}\big)\\
 = \iota_{X_{\alpha_1}}\iota_{X_{\alpha_2}}\iota_{X_{\alpha_3}}\varpi \hspace{+9cm}
\end{aligned}
\end{equation}
 is a central extension of $\Pi$.
\end{proposition}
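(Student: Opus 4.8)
The plan is to handle the two assertions of the proposition separately.

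For the first assertion, I would recognise $\Pi$ as the strict Lie $2$-algebra attached to a crossed module of Lie algebras: the Lie algebra $(\frX_{\rm Ham}(M),[-,-])$ acts on the abelian Lie algebra $\Omega^1_{\rm Ham}(M)$ by the Lie derivative, and the connecting map $\Omega^1_{\rm Ham}(M)\to\frX_{\rm Ham}(M)$ is zero. Three points make this work. Since $\varpi$ is closed and $\iota_{X_\alpha}\varpi=-\dd\alpha$, one has $\CL_{X_\alpha}\varpi=\dd\iota_{X_\alpha}\varpi+\iota_{X_\alpha}\dd\varpi=-\dd\dd\alpha=0$, so each $\CL_{X_\alpha}$ preserves the Hamiltonian condition and $\Omega^1_{\rm Ham}(M)$ is a genuine $\frX_{\rm Ham}(M)$-module, with $X_{\CL_{X_\alpha}\beta}=[X_\alpha,X_\beta]$. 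Because $\xi_f=\dd f$ is closed, $\CL_{X_\alpha}\xi_f=\dd\iota_{X_\alpha}\xi_f$, so the products \eqref{eq:central_brackets} are precisely $\pi_2(X,Y)=[X,Y]$ and $\pi_2(X,v)=\CL_X v$. Finally, with $\pi_1=0$ and $\pi_3=0$, the homotopy Jacobi identities \eqref{eq:homotopy_relations} reduce to the Jacobi identity on $\frX_{\rm Ham}(M)$, the identity $\CL_{[X,Y]}=[\CL_X,\CL_Y]$, and the Leibniz rule — all standard — together with relations that hold trivially because $\pi_1=0=\pi_3$; the Koszul signs coming from the shift $\sL[-1]$ are routine to track.

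For the second assertion, I would first note that $\sL'$ — the shlalo of the enhanced $2$-plectic manifold $(M,\varpi,\langle-,-\rangle)$ with $\langle\alpha,\beta\rangle=\iota_{X_\beta}\alpha-\iota_{X_\alpha}\beta$ — is an $L_\infty$-algebra by construction, being obtained from the standard shlalo of definition \ref{def:ordinary_shlalo} by an $L_\infty$-automorphism, and that its products close in the stated complex because $X_{\mu_2(\alpha_1,\alpha_2)}=[X_{\alpha_1},X_{\alpha_2}]$, the extra exact term $\dd(\iota_{X_{\alpha_1}}\alpha_2-\iota_{X_{\alpha_2}}\alpha_1)$ having vanishing Hamiltonian vector field. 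I would then construct the comparison morphism $p\colon\sL'\to\Pi$ with $p^1_1$ equal to $\alpha\mapsto X_\alpha$ in degree $0$ and $f\mapsto\dd f$ in degree $-1$ — a chain map, since $X_{\dd f}=0$ — together with $p^1_2(\alpha_1,\alpha_2):=c\,\iota_{X_{\alpha_1}}\iota_{X_{\alpha_2}}\varpi$ for a suitable constant $c$, and $p^1_k=0$ for $k\geq 3$ (forced by degree). The morphism relations at the $\mu_1$- and $\mu_2$-levels are immediate from the explicit products together with $X_{\mu_2(\alpha_1,\alpha_2)}=[X_{\alpha_1},X_{\alpha_2}]$. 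The $\mu_3$-level relation, schematically $0=p^1_1(\mu_3(\alpha_1,\alpha_2,\alpha_3))-\sum_{\rm cycl}\big(p^1_2(\alpha_1,\mu_2(\alpha_2,\alpha_3))\pm\pi_2(X_{\alpha_1},p^1_2(\alpha_2,\alpha_3))\big)$, is the substantive computation: inserting $\mu_3(\alpha_1,\alpha_2,\alpha_3)=\iota_{X_{\alpha_1}}\iota_{X_{\alpha_2}}\iota_{X_{\alpha_3}}\varpi$ and using $\dd(\iota_{X_2}\iota_{X_3}\varpi)=\iota_{[X_2,X_3]}\varpi$ and $\CL_{X_1}(\iota_{X_2}\iota_{X_3}\varpi)=\dd(\iota_{X_1}\iota_{X_2}\iota_{X_3}\varpi)+\iota_{X_1}\iota_{[X_2,X_3]}\varpi$ (valid since $\CL_{X_i}\varpi=0$), together with the homotopy Jacobi identity already in force for the standard shlalo, one checks that the cyclic sum collapses for the right value of $c$.

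It then remains to identify the kernel of $p$: the locally constant functions in degree $-1$ and the closed one-forms in degree $0$, all of whose products vanish, an abelian sub-Lie-$2$-algebra $Z\subset\sL'$ which sits inside $\sL'$ in exactly the way the constants sit inside an ordinary Poisson algebra. Exhibiting the sequence $0\to Z\to\sL'\to\Pi\to 0$ — exact in degree $0$, where it reads $0\to\Omega^1_{\rm cl}(M)\to\Omega^1_{\rm Ham}(M)\to\frX_{\rm Ham}(M)\to 0$ in parallel with the Poisson case, and with $\CC^\infty(M)$ playing the higher analogue of the functions in degree $-1$ — realises $\sL'$ as the advertised central extension of $\Pi$. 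The main obstacle is the $\mu_3$-level morphism identity, i.e.\ controlling the cyclic sums and Koszul signs and pinning down the constant $c$; a secondary point is to make precise the notion of central extension of Lie $2$-algebras so that the abelian ideal $Z$ genuinely qualifies.
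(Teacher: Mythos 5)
Your proposal is correct and follows the same two-step strategy as the paper: verify the Jacobi identity for $(\Pi,\pi_2)$ directly (your crossed-module phrasing is equivalent to the paper's computation using $\iota_{[X,Y]}=\CL_X\iota_Y-\iota_Y\CL_X$ and $\dd^2=0$), then realise the shlalo as an extension of $\Pi$ via the projection $\alpha\mapsto X_\alpha$, $f\mapsto \xi_f=\dd f$. The one genuine divergence is in the nature of that projection: the paper treats it as a strict homomorphism and only verifies the binary relations $X_{\mu_2(\alpha_1,\alpha_2)}=[X_{\alpha_1},X_{\alpha_2}]$ and $\xi_{\mu_2(\alpha,f)}=\pi_2(X_\alpha,\xi_f)$, whereas you add a nonzero component $p^1_2(\alpha_1,\alpha_2)=c\,\iota_{X_{\alpha_1}}\iota_{X_{\alpha_2}}\varpi$ so as to satisfy the ternary morphism identity as well. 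Your extra care buys something real: a strict morphism would need $\dd\,\iota_{X_{\alpha_1}}\iota_{X_{\alpha_2}}\iota_{X_{\alpha_3}}\varpi=0$, which fails in general, and the identities you invoke, $\dd\,\iota_{X_2}\iota_{X_3}\varpi=\iota_{[X_2,X_3]}\varpi$ and $\CL_{X_1}\iota_{X_2}\iota_{X_3}\varpi=\dd\,\iota_{X_1}\iota_{X_2}\iota_{X_3}\varpi+\iota_{X_1}\iota_{[X_2,X_3]}\varpi$, are precisely what makes the cyclic sum close for a suitable $c$ (cyclic invariance of $\iota_{X_1}\iota_{X_2}\iota_{X_3}\varpi$ produces the factor of $3$ that pins $c$ down). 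Two small points to tighten: the degree-$0$ part of the kernel consists of the \emph{closed} Hamiltonian one-forms (those with $X_\alpha=0$), and these are central only up to homotopy, since for closed $\alpha_1$ one has $\mu_2(\alpha_1,\alpha_2)=-\dd\,\iota_{X_{\alpha_2}}\alpha_1=\mu_1(-\iota_{X_{\alpha_2}}\alpha_1)$, which is $\mu_1$-exact rather than zero; so the categorified notion of centrality you flag at the end is not an optional refinement but the statement you actually have to prove.
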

\begin{proof}
 To show that $(\Pi,\pi_2)$ is a Lie 2-algebra, we need to check one associativity condition:
 \begin{equation}
 \begin{aligned}
  \pi_2(X_\alpha,\pi_2(X_\beta,\xi_f))+\pi_2(X_\beta,\pi_2(\xi_f,X_\alpha))+\pi_2(\xi_f,\pi_2(X_\alpha,X_\beta))&=\\
  \dd \iota_{X_\alpha}\dd \iota_{X_\beta}\dd f-\dd \iota_{X_\beta}\dd \iota_{X_{\alpha}}\dd f-\dd \iota_{[X_\alpha,X_\beta]} \dd f&=\\
  \dd \iota_{X_\alpha}\dd \iota_{X_\beta}\dd f-\dd \iota_{X_\beta}\dd \iota_{X_{\alpha}}\dd f-\dd (\CL_{X_\alpha}\iota_{X_\beta}-\iota_{X_\beta}\CL_{X_\alpha})\dd f&=0~.
 \end{aligned}
 \end{equation}
 Also, we have 
 \begin{equation}
  X_{\mu_2(\alpha,\beta)}=\pi_2(X_\alpha,X_\beta)\eand \xi_{\mu_2(\alpha,f)}=\pi_2(X_\alpha,\xi_f)
  \end{equation}
and the maps $X:\Omega^1_{\rm Ham}(M)\rightarrow \frX_{\rm Ham}(M)$ and $\xi:\CC^\infty(M)\rightarrow \Omega^1$ provide a Lie 2-algebra homomorphism with kernel the exact 1-forms and the constant functions. Both are central in $\Pi$.
\end{proof}

This observation also fits from a different perspective: in the case of symplectic manifolds, we centrally extended the Lie-algebra of sections of the corresponding symplectic Lie 1-algebroid $TM$. In the 2-plectic case, we centrally extend the Lie 2-algebra of sections of the corresponding exact symplectic Lie 2-algebroid $TM\oplus T^*M$.

\section{The $L_\infty$-algebra of a symplectic Lie $n$-algebroid}

\subsection{Symplectic Lie $n$-algebroids and associated Lie $n$-algebras}

Lie $n$-algebroids are truncated $L_\infty$-algebroids and they are most conveniently described as general N$Q$-manifolds, cf.\ definition \ref{def:NQ-manifold}. Note that an $\NN$-graded manifold $\CM$ comes with a natural projection onto its {\em body}, which is the manifold $M$ corresponding to the degree $0$-part of $\CM$. Therefore, $\CM$ can be regarded as a fiber bundle $\CE\rightarrow M$ over $M$.
\begin{definition}
 A \uline{symplectic Lie $n$-algebroid} is an N$Q$-manifold endowed with a symplectic structure $\omega$ of degree $n$ satisfying $\CL_Q\omega=0$.
\end{definition}
Strict morphisms of symplectic Lie $n$-algebroids are now readily defined, cf.\ e.g.\ \cite{16298602}.
\begin{definition}\label{def:strict_morphisms}
 A \uline{strict morphism of symplectic Lie $n$-algebroids} $\Gamma:(\CM,\omega,Q)\rightarrow\linebreak (\CM',\omega',Q')$ is a smooth map $\gamma:\CM\rightarrow \CM'$ such that $\omega=\gamma^*\omega'$ and $Q$ and $Q'$ are $\gamma$-related: $\gamma_*(Q_x)=Q'_{\gamma(x)}$ for all $x\in \CM$.
\end{definition}

It is more convenient to encode a symplectic Lie $n$-algebroid in terms of a Hamiltonian and a Poisson bracket.
\begin{proposition}
 A symplectic Lie $n$-algebroid $(\CM,\omega,Q)$ can be equivalently described by a non-degenerate\footnote{i.e.\ the Poisson bivector is an invertible matrix in any coordinates} Poisson bracket $\{-,-\}$ on $\CC^\infty(\CM)$, which is given by the inverse of $\omega$, together with a function $\Theta$ on $\CM$ of degree $n+1$ satisfying $Qf=\{\Theta,f\}$, for all $f\in C^\infty(\CM)$, and $\{\Theta,\Theta\}=0$.
\end{proposition}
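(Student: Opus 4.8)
The plan is to reproduce the now-classical correspondence between graded symplectic structures and their Hamiltonians, going back to Roytenberg. Start from a symplectic Lie $n$-algebroid $(\CM,\omega,Q)$. Non-degeneracy of $\omega$ gives, for each $f\in\CC^\infty(\CM)$, a unique vector field $X_f$ with $\iota_{X_f}\omega=\dd f$, of degree $(\deg f)-n$; setting $\{f,g\}:=X_f g=\iota_{X_f}\iota_{X_g}\omega$ (up to the usual sign) defines a bracket of degree $-n$ on $\CC^\infty(\CM)$. Exactly as in the ungraded case, closedness $\dd\omega=0$ is equivalent to the graded Jacobi identity for $\{-,-\}$, so this is a non-degenerate Poisson bracket, and it is by construction the ``inverse of $\omega$.'' It remains to produce $\Theta$ and to verify its two properties.

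For this I would use the Euler (degree) vector field $\eps$ on the $\NN$-graded manifold $\CM$, which acts on a form $\vartheta$ of homogeneous weight $k$ by $\CL_\eps\vartheta=k\,\vartheta$. Cartan's formula gives $\CL_\eps=\dd\iota_\eps+\iota_\eps\dd$, so on a closed form of positive weight $\vartheta=\tfrac1k\dd(\iota_\eps\vartheta)$; this graded Poincar\'e lemma is the technical heart of the argument. Now $\dd(\iota_Q\omega)=\CL_Q\omega-\iota_Q\dd\omega=0$ since $\CL_Q\omega=0$ and $\dd\omega=0$, and $\iota_Q\omega$ has weight $n+1>0$, hence $\iota_Q\omega=\dd\Theta$ for a function $\Theta$ of degree $n+1$ (unique, as the ambiguity would be a weight-$0$ constant). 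Comparing $\iota_{X_\Theta}\omega=\dd\Theta=\iota_Q\omega$ and using non-degeneracy of $\omega$ yields $Q=X_\Theta$, i.e.\ $Qf=\{\Theta,f\}$ for all $f$. Finally $Q^2=\tfrac12[Q,Q]=\tfrac12[X_\Theta,X_\Theta]=\tfrac12X_{\{\Theta,\Theta\}}$, so $Q^2=0$ forces $X_{\{\Theta,\Theta\}}=0$; since the centre of a symplectic Poisson bracket consists of the locally constant functions and $\{\Theta,\Theta\}$ has weight $2(n+1)-n=n+2>0$, this gives $\{\Theta,\Theta\}=0$.

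For the converse, given a non-degenerate Poisson bracket $\{-,-\}$ of degree $-n$ and a degree-$(n+1)$ function $\Theta$ with $\{\Theta,\Theta\}=0$, let $\omega$ be the symplectic $2$-form of degree $n$ dual to the Poisson bivector and set $Q:=\{\Theta,-\}$, a vector field of degree $(n+1)-n=1$. The graded Jacobi identity gives $Q^2=\tfrac12X_{\{\Theta,\Theta\}}=0$, and $\CL_Q\omega=\CL_{X_\Theta}\omega=\dd\iota_{X_\Theta}\omega=\dd\dd\Theta=0$, so $(\CM,\omega,Q)$ is a symplectic Lie $n$-algebroid; the two constructions are visibly mutually inverse. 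The only genuinely delicate points, which I would spell out carefully, are the graded Poincar\'e lemma used to extract $\Theta$ and the bookkeeping of sign conventions in $\iota_{X_f}\omega=\pm\dd f$ and in the graded Poisson and commutator identities; the rest is weight counting and transcription of the ungraded symplectic calculus.
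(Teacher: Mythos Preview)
Your proof is correct and follows the same route as the paper: the paper simply cites \cite[Lemma 2.2]{Roytenberg:0203110} for the existence of $\Theta$ and then invokes the Jacobi identity together with non-degeneracy to get $\{\Theta,\Theta\}=0$, whereas you have unpacked Roytenberg's lemma by writing out the Euler-vector-field argument for the graded Poincar\'e lemma. The only cosmetic difference is that the paper phrases the last step as ``$Q^2f=0$ plus Jacobi plus non-degeneracy,'' which is exactly your ``$X_{\{\Theta,\Theta\}}=0$ plus the centre is in weight $0$.''
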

\begin{proof}
 It is clear that a non-degenerate Poisson bracket encodes a symplectic form. The fact that a homological vector field of degree 1 with $Q^2=0$ and $\CL_Q\omega=0$ is a Hamiltonian vector field with Hamiltonian $\Theta$ is proved in \cite[Lemma 2.2]{Roytenberg:0203110}. Correspondingly, $Qf=\{\Theta,f\}$, and $Q^2f=0$ then implies $\{\Theta,\Theta\}=0$ via the Jacobi identity 
 \begin{equation}
  \{a,\{b,c\}\}=\{\{a,b\},c\}+(-1)^{(|a|+|\omega|)(|b|+|\omega|)}\{b,\{a,c\}\}
 \end{equation} 
 and the non-degeneracy of the Poisson bracket.
\end{proof}
Recall that given a symplectic form $\omega=\omega_{ij}\dd z^i\wedge \dd z^j$ on a $\RZ$-graded vector space in some local coordinates $z^i$, the inverse matrix $\pi^{ij}=(\omega_{ij})^{-1}$ encodes the Poisson bivector $\pi^{ij}\der{z^i}\wedge \der{z^j}$. We then have
\begin{equation}
 \{f,g\}:=f\overleftarrow{\der{z^i}}\pi^{ij}\overrightarrow{\der{z^j}} g=-(-1)^{(|f|+|\omega|)(|g|+|\omega|)}\{g,f\}~,
\end{equation}
where $\overleftarrow{\der{z^i}}$ denotes a left-acting derivative. Note also that $|\{f,g\}|=|f|+|g|+|\omega|$.

We can now associate a Lie $n$-algebra to each symplectic Lie $n$-algebroid, generalizing \cite[Theorem 4.3]{Roytenberg:1998vn} and \cite[Proposition 8.1]{Zambon:2010ka}.
\begin{theorem}
 Each symplectic Lie $n$-algebroid $(\CM,\{-,-\},\Theta)$ comes with an associated Lie $n$-algebra 
 \begin{equation}
  \sL_{\CM}~:=~\CA_0\rightarrow \CA_1\rightarrow \CA_2\rightarrow \ldots \rightarrow \CA_{n-1}~,
 \end{equation}
 where $\CC^\infty(\CM)=\CA_0\oplus \CA_1\oplus \CA_2\oplus \ldots $ is the decomposition of $\CC^\infty(\CM)$ into parts $\CA_i$ of homogeneous grading $i$. Note that $\CA_0=\CC^\infty(M)$. The higher products read as 
 \begin{subequations}
 \begin{equation}
  \mu_1(e)=\left\{\begin{array}{ll}
  0 & e\in \CA_{n-1}~,\\
  \{\Theta,e_1\} & \mbox{else}~,\\
  \end{array}\right.
 \end{equation}
 and for $k>1$,
 \begin{equation}
  \mu_k(e_1,\ldots,e_k)=\frac{(-1)^{k+k(k+1)/2}}{(k-1)!}B_{k-1}\sum_\sigma (-1)^\sigma\{\ldots\{\delta e_{\sigma(1)},e_{\sigma(2)}\}\ldots,e_{\sigma(k)}\}~.
 \end{equation}
 \end{subequations}
 Here, $B_k$ are the (first) Bernoulli numbers\footnote{i.e.\ $B_{0},\ldots=1,-\tfrac12,\tfrac{1}{6},0,-\tfrac{1}{30},0,\tfrac{1}{42},\ldots$}, $e,e_1,\ldots,e_k\in \sL_{\CM}$, $\sigma$ runs over all permutations, $(-1)^\sigma$ is the Koszul sign of the permutation and 
 \begin{equation}
  \delta e=\left\{\begin{array}{ll}
  \{\Theta,e\} & e\in \CA_{n-1}~,\\
  0 & \mbox{else}~.\\
  \end{array}\right.
 \end{equation}
\end{theorem}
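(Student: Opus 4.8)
The plan is to exhibit $\sL_\CM$ as the output of the higher derived bracket construction applied to the graded Lie algebra of functions on $\CM$, so that the homotopy Jacobi identities $\CD^2=0$ reduce to the graded Jacobi identity for $\{-,-\}$ together with the classical master equation $\{\Theta,\Theta\}=0$. First I would record the algebraic input, which is essentially the content of the preceding proposition: $\frg:=\CC^\infty(\CM)$ carries the graded Poisson bracket $\{-,-\}$ of degree $-n$, and after the standard d\'ecalage (equivalently, reading all signs with the shifted parity $|-|+|\omega|$) it becomes a $\RZ$-graded Lie superalgebra in which $\Theta$ is an odd element with $\{\Theta,\Theta\}=0$, i.e.\ a Maurer--Cartan element; equivalently $Q=\{\Theta,-\}$ is a square-zero degree-$1$ derivation of $\{-,-\}$, so that $(\frg,\{-,-\},Q)$ is a DGLA.

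Next I would fix the splitting. Since $\{\CA_i,\CA_j\}\subseteq\CA_{i+j-n}$, the functions of degree $\geq n$ form a Lie subalgebra $\frb:=\bigoplus_{i\geq n}\CA_i$ containing $\Theta$, and $\frg=\sL_\CM\oplus\frb$ as graded vector spaces, where $\sL_\CM=\bigoplus_{i=0}^{n-1}\CA_i$ is a complementary subspace --- a subalgebra, but for $n\geq 2$ not abelian, as $\{\CA_{n-1},\CA_{n-1}\}\subseteq\CA_{n-2}$. Writing $P:\frg\to\sL_\CM$ for the projection along $\frb$, one has $Q|_{\sL_\CM}=\mu_1+\delta$ with $\mu_1=P\circ Q|_{\sL_\CM}$ and $\delta=(\id-P)\circ Q|_{\sL_\CM}$: indeed $Q(\CA_i)\subseteq\CA_{i+1}\subseteq\sL_\CM$ for $i<n-1$ while $Q(\CA_{n-1})\subseteq\CA_n\subseteq\frb$, so $\mu_1$ is $\{\Theta,-\}$ off the top slot and zero on $\CA_{n-1}$, and $\delta$ is supported on $\CA_{n-1}$ with image in $\CA_n$, exactly as in the statement; moreover $\mu_1^2=0$ follows from $Q^2=0$ and the grading.

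With this data in hand I would invoke the higher derived bracket theorem --- Voronov's construction, together with the refinement producing the Bernoulli-number coefficients $B_{k-1}/(k-1)!$ that is needed precisely because the complement $\sL_\CM$ fails to be abelian (for $n=1$ it is abelian and the classical Poisson algebra drops out with no corrections). From a graded Lie superalgebra $\frg=\sL_\CM\oplus\frb$ with $\frb$ a subalgebra, the projection $P$, and a Maurer--Cartan element $\Theta\in\frb$, one obtains an $L_\infty$-structure on $\sL_\CM$ whose $k$-ary product for $k\geq 2$ is the graded antisymmetrisation $\sum_\sigma(-1)^\sigma\{\ldots\{\delta e_{\sigma(1)},e_{\sigma(2)}\},\ldots,e_{\sigma(k)}\}$ with the stated coefficient and Koszul signs; no outer projection is needed since $\{\ldots\{\delta e_1,e_2\},\ldots,e_k\}\in\CA_{\leq n-k+1}\subseteq\sL_\CM$ already after one bracketing, and these products vanish for $k>n+1$ because $\CA_{\leq-1}=0$, so $\sL_\CM$ is automatically an $n$-term $L_\infty$-algebra, i.e.\ a semistrict Lie $n$-algebra. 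Specialising to $n=2$, and to the Vinogradov case for general $n$, recovers \cite[Theorem 4.3]{Roytenberg:1998vn} and \cite[Proposition 8.1]{Zambon:2010ka}. If one prefers to avoid quoting the general machinery, the same conclusion follows by checking $\CD^2=0$ on $\bar S(\sL_\CM[-1])$ slot by slot: every coefficient that occurs is a universal ($\CM$-independent) rational number in the $B_j$ which vanishes by the Bernoulli identity $\sum_{j=0}^{m-1}\binom{m}{j}B_j=0$ ($m\geq 2$), while the bracket monomials cancel in pairs by the graded Jacobi identity and $Q^2=0=\{\Theta,\Theta\}$.

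The main obstacle is exactly this bookkeeping: matching the Bernoulli coefficients against the Koszul signs produced by the antisymmetrisation $\sum_\sigma(-1)^\sigma$, so that every homotopy Jacobi identity closes. I would isolate it as a purely combinatorial lemma about the universal coefficients, independent of $\CM$, following the derived-bracket literature; granting that, the proof of the theorem is only the verification of the three structural facts above --- that $\frg$ is a graded Lie superalgebra, that $\frb$ is a subalgebra complementary to $\sL_\CM$, and that $\Theta\in\frb$ satisfies the master equation.
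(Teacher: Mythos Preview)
Your proposal is correct and takes essentially the same route as the paper: both reduce the theorem to the higher derived bracket construction with Bernoulli-number coefficients, which the paper simply cites as \cite[Theorem 3]{Getzler:1010.5859} (itself a special case of \cite[Theorem 5.5]{Fiorenza:0601312}), noting only that the extra sign $(-1)^{k(k+1)/2}$ reflects a different convention for the homotopy Jacobi identities. Your explicit verification of the V-data --- that $\frb=\bigoplus_{i\geq n}\CA_i$ is a Lie subalgebra containing the Maurer--Cartan element $\Theta$, that $\sL_\CM$ is the complementary (non-abelian) summand, and that the nested brackets $\{\ldots\{\delta e_{\sigma(1)},e_{\sigma(2)}\},\ldots,e_{\sigma(k)}\}$ already land in $\sL_\CM$ without further projection --- is precisely the setup that makes those cited theorems apply, and is left implicit in the paper.
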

\begin{proof}
 One readily checks the homotopy Jacobi identities for the $\mu_k$ for
 low values of $k$. In its full generality, this theorem is a direct
 consequence of \cite[Theorem 3]{Getzler:1010.5859}, which in turn can
 be regarded as a special case of \cite[Theorem
 5.5]{Fiorenza:0601312}. The additional sign $(-1)^{k(k+1)/2}$ in the
 definition of $\mu_k$ compared to \cite{Getzler:1010.5859}
is due to different sign conventions in the definition of our homotopy Jacobi identities.
\end{proof}

As an example, we give the details for a symplectic Lie 3-algebroid. 
\begin{example}
 We start from the N$Q$-manifold $T^*[3]V[1]$, where $V$ is some vector bundle of rank $r$ over a manifold $M$ of dimension $d$. The total space decomposes as $\CE=E_1\oplus E_2\oplus E_3$ with ranks $r$, $r$, $d$. On the base and the fibers, we introduce coordinates $x^i,\xi^\alpha,\zeta_\alpha,p_i$ of degrees $0$, $1$, $2$ and $3$, for which we have the natural symplectic form
 \begin{equation}
  \omega=\dd p_i\wedge \dd x^i+\dd \xi^\alpha\wedge \dd \zeta_\alpha~.
 \end{equation}
 One readily shows that the Hamiltonian, a function of degree 4, is necessarily of the form
 \begin{equation}
  \Theta=\rho^i_\alpha \xi^\alpha p_i+\tfrac12 r_{\alpha\beta}^\gamma \xi^\alpha\xi^\beta\zeta_\gamma+\tfrac12 s^{\alpha\beta}\zeta_\alpha\zeta_\beta+\tfrac{1}{4!}t_{\alpha\beta\gamma\delta}\xi^\alpha\xi^\beta\xi^\gamma\xi^\delta~,
 \end{equation}
 cf.\ e.g.\ \cite{Kotov:2010wr}, with further constraints on the coefficients $\rho^i_\alpha$, $r_{\alpha\beta}^\gamma$, $s^{\alpha\beta}$ and $t_{\alpha\beta\gamma\delta}$. The underlying complex $\CA_0\oplus\CA_1\oplus \CA_2$ can be identified with
 \begin{equation}
  \sL_{\CM}~\cong~\CC^\infty(M)~\oplus~V^*~\oplus~(V\oplus \wedge^2 V^*)~.
 \end{equation}
 The higher products are then readily calculated. For example, we have
 \begin{equation}
 \begin{aligned}
  \mu_1(f)&=\{\Theta,f\}=\rho^i_\alpha \xi^\alpha \dpar_i f~,\\
  \mu_1(f_\alpha\xi^\alpha)&=\rho^i_\beta \xi^\beta(\dpar_i f_\alpha)\xi^\alpha+\tfrac12 r^\gamma_{\alpha\beta}\xi^\alpha\xi^\beta f_\gamma+s^{\alpha\beta}\zeta_\alpha f_\beta~,
 \end{aligned}
 \end{equation}
 where $\dpar_i=\der{x^i}$ and $f,f_\alpha,\ldots\in \CC^\infty(M)$.
\end{example}

\subsection{Vinogradov algebroids and shlalos}

Before continuing with automorphisms of symplectic Lie $n$-algebroids,
let us consider the relation between shlalos and symplectic Lie
$n$-algebroids in more detail, further motivating our
later discussion.

An $n$-plectic manifold $(M,\varpi)$ comes with the following natural symplectic Lie $n$-algebroid, see e.g.\ \cite{Bi:1003.1350}:
\begin{definition}\label{Vinogradov definition}
 The \uline{Vinogradov algebroid}\,\footnote{The underlying bracket was first introduced in \cite{MR1074539}. A more general definition of Vinogradov algebroids can be found in \cite{Gruetzmann:2014ica}.} $\CV_n$ of a manifold $M$, is given by the vector bundle $TM\oplus \wedge^{n-1} T^*M$ with anchor map, bracket and non-degenerate bilinear form
 \begin{equation}
  \begin{aligned}
    \rho(X_1+\alpha_1)&:=X_1~,\\
    \lsb X_1+\alpha_1,X_2+\alpha_2\rsb&:=[X_1,X_2]+\CL_{X_1}\alpha_2-\CL_{X_2}\alpha_1+\tfrac12(\dd\iota_{X_2}\alpha_1-\dd\iota_{X_1}\alpha_2)~,\\
    \big(X_1+\alpha_1,X_2+\alpha_2\big)&:=\tfrac12\big(\iota_{X_1}\alpha_2+\iota_{X_2}\alpha_1\big)~,
  \end{aligned}
 \end{equation}
 where $X_{1,2}\in \Gamma(TM)$ and $\alpha_{1,2}\in \Gamma(\wedge^{n-1} T^*M)$. As in the case of Courant algebroids, we also define
 \begin{equation}
  T(e_1,e_2,e_3):=-\tfrac13\Big(\big(\lsb e_1,e_2\rsb,e_3\big)+\big(\lsb e_2,e_3\rsb,e_1\big)+\big(\lsb e_3,e_1\rsb,e_2\big)\Big)~.
 \end{equation}
 The \uline{twisted Vinogradov algebroid} of an $n$-plectic manifold $(M,\varpi)$ is the same as the Vinogradov algebroid $\CV_n$ with bracket replaced by 
 \begin{equation}
    \lsb X_1+\alpha_1,X_2+\alpha_2\rsb_{\varpi}:=[X_1,X_2]+\CL_{X_1}\alpha_2-\CL_{X_2}\alpha_1+\tfrac12(\dd\iota_{X_2}\alpha_1-\dd\iota_{X_1}\alpha_2)+\iota_{X_1}\iota_{X_2}\varpi~.
 \end{equation}
\end{definition}
In the case $n=2$, the above Vinogradov algebroid is an exact Courant algebroid. In this case, the twist element $\varpi(X_1,X_2,X_3)=\big(X_1,\lsb X_2,X_3\rsb_\varpi\big)$ yields a 3-form representing the {\em \v Severa class} which classifies Courant algebroid structures on $TM\oplus T^*M$ \cite{Severa:1998aa}. The twist element $\varpi$ above generalizes this to Vinogradov algebroids.

The properties of the structure maps on Vinogradov algebroids were studied e.g.\ in \cite{Gualtieri:2003dx,Bi:1003.1350,Gruetzmann:2014ica}.
\begin{proposition}\label{Vinogradov properties}
 Let $e_1,e_2,e_3$ be sections of the vector bundle $E:=TM\oplus \wedge^{n-1} T^*M$ underlying the Vinogradov algebroid and let $f\in \CC^\infty(M)$. Then
 \begin{equation}
  \begin{aligned}
    \lsb e_1,\lsb e_2,e_3\rsb\rsb+\lsb e_2,\lsb e_3,e_1\rsb\rsb+\lsb e_3,\lsb e_1,e_2\rsb\rsb&=-\dd T(e_1,e_2,e_3)~,\\
    \lsb e_1,f e_2\rsb&=f\lsb e_1,e_2\rsb+\rho(e_1)fe_2-\dd f\wedge (e_1,e_2)~,\\
    \rho(\lsb e_1,e_2\rsb)&=[\rho(e_1),\rho(e_2)]~.
  \end{aligned}
 \end{equation}
\end{proposition}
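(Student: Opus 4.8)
The plan is to verify each of the three identities by direct computation in the Cartan calculus, with the first one (the Jacobiator) carrying essentially all of the work. Throughout I write $e_i = X_i + \alpha_i$ with $X_i \in \frX(M)$ and $\alpha_i \in \Omega^{n-1}(M)$, and use freely the identities $\CL_X = \dd\iota_X + \iota_X\dd$, $[\CL_X,\iota_Y] = \iota_{[X,Y]}$, $[\CL_X,\CL_Y] = \CL_{[X,Y]}$, $\dd\CL_X = \CL_X\dd$, and $\dd^2 = 0$.

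The third identity is immediate: the $TM$-component of $\lsb e_1,e_2\rsb$ is by definition $[X_1,X_2]$, so $\rho(\lsb e_1,e_2\rsb) = [X_1,X_2] = [\rho(e_1),\rho(e_2)]$, with nothing to check beyond reading off the $TM$-component. For the second identity I would substitute $fe_2 = fX_2 + f\alpha_2$ into the bracket and expand using $[X_1,fX_2] = f[X_1,X_2] + (X_1 f)X_2$, $\CL_{X_1}(f\alpha_2) = f\CL_{X_1}\alpha_2 + (X_1 f)\alpha_2$, $\CL_{fX_2}\alpha_1 = f\CL_{X_2}\alpha_1 + \dd f\wedge\iota_{X_2}\alpha_1$, $\iota_{fX_2}\alpha_1 = f\iota_{X_2}\alpha_1$, and the Leibniz rule for $\dd$. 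Collecting the result into the terms proportional to $f$, to $X_1 f$, and the remaining $\dd f\wedge(\cdots)$ terms produces $f\lsb e_1,e_2\rsb$, $(\rho(e_1)f)e_2$, and $-\tfrac12\dd f\wedge(\iota_{X_1}\alpha_2 + \iota_{X_2}\alpha_1) = -\dd f\wedge(e_1,e_2)$, which is the claim.

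For the Jacobiator, note first that its $TM$-component is $[X_1,[X_2,X_3]] + \text{cyclic} = 0$ by the Jacobi identity for vector fields, consistent with the right-hand side $-\dd T$ lying in $\Omega^{n-1}(M)$; so only the $(n-1)$-form part needs attention. I would handle this via the Dorfman bracket: set $e_1 \circ e_2 := [X_1,X_2] + \CL_{X_1}\alpha_2 - \iota_{X_2}\dd\alpha_1$. A short Cartan computation shows that $\lsb e_1,e_2\rsb = \tfrac12(e_1\circ e_2 - e_2\circ e_1)$ reproduces the given bracket, and that $e\circ e = \dd\iota_X\alpha = \dd(e,e)$. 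One then checks the Leibniz (Loday) identity $e_1\circ(e_2\circ e_3) = (e_1\circ e_2)\circ e_3 + e_2\circ(e_1\circ e_3)$, from which the identity $\lsb\lsb e_1,e_2\rsb,e_3\rsb + \text{cyclic} = \tfrac13\dd\big((\lsb e_1,e_2\rsb,e_3) + \text{cyclic}\big) = -\dd T(e_1,e_2,e_3)$ follows by exactly the same algebraic manipulation used for exact Courant algebroids (the $n=2$ case). Alternatively one can expand all three double brackets directly into terms of the types $\CL\CL\alpha$, $\CL\dd\iota\alpha$ and $\dd\iota\dd\iota\alpha$ and reduce using the relations above; the ``pure Lie derivative'' pieces cancel after using the vector-field Jacobi identity, and the remainder organizes into $\dd$ of $-T$.

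The main obstacle is the Jacobiator: in the Dorfman-bracket route it is the verification of the Leibniz identity for $\circ$ together with the bookkeeping of the coefficient $\tfrac13$ in the Leibniz-to-Jacobiator formula, while in the direct route it is the term-by-term cancellation in the $\Omega^{n-1}$-component. Both are classical and are recorded, for $\CV_n$ in various generalities, in \cite{Gualtieri:2003dx,Bi:1003.1350,Gruetzmann:2014ica}; I would therefore present the first two identities in full and either give the Dorfman-bracket argument in outline or simply refer to these works for the Jacobiator.
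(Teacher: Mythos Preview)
Your proposal is correct for the untwisted bracket $\lsb-,-\rsb$, and in fact more explicit than the paper's own treatment of that case: the paper simply invokes \cite{Gruetzmann:2014ica} for the untwisted identities and moves on, whereas you sketch an honest argument via the Dorfman bracket $e_1\circ e_2 = [X_1,X_2] + \CL_{X_1}\alpha_2 - \iota_{X_2}\dd\alpha_1$ and the standard Leibniz-to-Jacobiator manipulation. That route is clean and, as you note, insensitive to the form degree~$n$.

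The substantive difference is that the paper's proof is almost entirely about the \emph{twisted} bracket $\lsb-,-\rsb_\varpi$, which you do not address. There the authors compute $T_\varpi = T - \tfrac32\,\iota_{X_1}\iota_{X_2}\iota_{X_3}\varpi$, expand the twisted double bracket directly, and obtain
\[
\lsb e_1,\lsb e_2,e_3\rsb_\varpi\rsb_\varpi + \text{cycl.} = -\dd T_\varpi(e_1,e_2,e_3) - \tfrac12\,\iota_{X_1}\iota_{X_2}\iota_{X_3}\dd\varpi~,
\]
so that closedness of $\varpi$ is exactly what is needed. This twisted statement is what is used downstream (Theorem~\ref{thm:embedding_twisted}), so the part you actually prove is the part the paper dispatches by citation. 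Your Dorfman approach extends painlessly to the twisted case by adding $\iota_{X_1}\iota_{X_2}\varpi$ to $e_1\circ e_2$ and checking that the Loday identity survives iff $\dd\varpi=0$; including that check would bring your argument up to the same coverage as the paper's.
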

\begin{proof} 
For the untwisted case, the above is just the well known exact Courant algebroid
over $TM\oplus \wedge^{n-1} T^*M$, as described in \cite{Gruetzmann:2014ica}. For the twisted Vinogradov algebroid, one can calculate
  that
  \begin{equation}
  T_\varpi(e_1,e_2,e_3)=T(e_1,e_2,e_3)-\tfrac{3}{2}\varpi(X_1,X_2,X_3)~,  
  \end{equation}
while the elements of the Jacobiator become
\begin{align}
  \lsb e_1,\lsb e_2,e_3\rsb_\varpi\rsb_\varpi=&\lsb e_1\lsb
                                                e_2,e_3\rsb\rsb+\mathcal{L}_{X_1}\iota_{X_2}\iota_{X_3}\varpi
                                                -\tfrac{1}{2}\dd\iota_{X_1}\iota_{X_2}\iota_{X_3}\varpi+
\iota_{X_1}\iota_{[X_2,X_3]}\varpi~\nonumber\\
\label{Jac} =&\lsb e_1\lsb
                                                e_2,e_3\rsb\rsb+\iota_{X_1}\dd\iota_{X_2}\iota_{X_3}\varpi
                                               +\tfrac{1}{2}\dd\iota_{X_1}\iota_{X_2}\iota_{X_3}\varpi+
\iota_{X_1}\iota_{[X_2,X_3]}\varpi~.
\end{align}
Making use of the fact that
\begin{equation*}
  \iota_{X_1}\iota_{X_2}\iota_{X_3}\dd\varpi=\left(\iota_{X_1}\iota_{[X_2,X_3]}
    + \iota_{X_1}\dd
    \iota_{X_2}\iota_{X_3}+\text{cycl.}(X_1,X_2,X_3)\right)\varpi +2\dd \iota_{X_1}\iota_{X_2}\iota_{X_3}\varpi~,
\end{equation*}
one can readily calculate from (\ref{Jac}) that
\begin{equation}
  \label{eq:2}
  \lsb e_1,\lsb e_2,e_3\rsb_\varpi\rsb_\varpi+\text{cycl.}=(\lsb e_1\lsb
                                                e_2,e_3\rsb\rsb+\text{cycl.}) +\tfrac{3}{2}\dd\iota_{X_1}\iota_{X_2}\iota_{X_3}\varpi-\tfrac{1}{2}\iota_{X_1}\iota_{X_2}\iota_{X_3}\dd\varpi~.
\end{equation}
We therefore have for the twisted Vinogradov algebroid that the
Jacobiator equation becomes
\begin{equation}
  \label{eq:3}
    \lsb e_1,\lsb e_2,e_3\rsb_\varpi\rsb_\varpi+\text{cycl.}=-\dd T_\varpi(e_1,e_2,e_3)- \tfrac{1}{2} \iota_{X_1}\iota_{X_2}\iota_{X_3}\dd\varpi~,
\end{equation}
so that we require $\varpi$ to be closed for proposition
\ref{Vinogradov properties} to still hold in the twisted case. Note
that neither the degree of the form $\varpi$, nor of the elements $\alpha_i$,
have been used in these calculations. That is, what held for
Courant and twisted Courant algebroids, cf.\ \cite{Gualtieri:2003dx}, still goes through in the same
way.

To check the second equality in \ref{Vinogradov properties} for both
the twisted as untwisted case, just use $f\cdot
e=f\cdot (X+\alpha)=fX+f\alpha$. The last equality in the
proposition trivially follows from definition \ref{Vinogradov definition}.
 \end{proof}
The Vinogradov algebroid is readily interpreted as a symplectic Lie $n$-algebroid, cf.\ e.g.\ \cite{Gruetzmann:2014ica}. 
\begin{proposition}
 Consider the N$Q$-manifold $\CM=T^*[n]T[1]M$ with fiber coordinates $x^i,\xi^i,\zeta_i,p_i$ of grading $0$, $1$, $n-1$ and $n$. Together with 
 \begin{equation}
  \omega=\dd p_i\wedge \dd x^i+\dd \xi^i\wedge \dd \zeta_i\eand \Theta=-\xi^ip_i~,
 \end{equation}
 $\CM$ becomes a symplectic Lie $n$-algebroid which contains the Vinogradov algebroid if we identify $\Gamma(E)\equiv\CA_{n-1}$,
 \begin{equation}
  \lsb e_1,e_2\rsb\equiv\tfrac12\left(\{\{\Theta,e_1\},e_2\}-\{\{\Theta,e_2\},e_1\}\right)\eand (e_1,e_2)\equiv\tfrac12\{e_1,e_2\}~.
 \end{equation}
 The twisted Vinogradov algebroid with $n$-plectic structure 
 \begin{equation}
  \varpi=\tfrac{1}{(n+1)!}\varpi_{i_1\ldots i_{n+1}}\dd x^{i_1}\wedge \ldots \wedge \dd x^{i_{n+1}}~,
 \end{equation}
 is contained in the symplectic Lie $n$-algebroid given by $\CM$ with
 \begin{equation}
  \omega=\dd p_i\wedge \dd x^i+\dd \xi^i\wedge \dd \zeta_i\eand \Theta=-\xi^ip_i+\tfrac{1}{(n+1)!}\varpi_{i_1\ldots i_{n+1}}\xi^{i_1}\ldots \xi^{i_{n+1}}~.
 \end{equation}
\end{proposition}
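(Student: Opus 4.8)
The plan is to work directly in the local coordinates $x^i,\xi^i,\zeta_i,p_i$ of degree $0,1,n-1,n$, treating the untwisted Vinogradov algebroid first and then adding the twist. I would begin by checking that $\omega=\dd p_i\wedge\dd x^i+\dd\xi^i\wedge\dd\zeta_i$ is homogeneous of degree $n$ (the two summands carry internal degrees $n+0$ and $1+(n-1)$), is manifestly closed, and is non-degenerate since it pairs $x^i$ with $p_j$ and $\xi^i$ with $\zeta_j$; the associated graded Poisson bracket then has as its only non-vanishing elementary brackets $\{x^i,p_j\}\propto\delta^i_j$ and $\{\xi^i,\zeta_j\}\propto\delta^i_j$. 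Next I would observe that $\Theta=-\xi^ip_i$ has degree $n+1$ and that $\{\Theta,\Theta\}=0$ holds trivially: $\Theta$ depends neither on $x$ nor on $\zeta$, so both halves of the graded bracket of $\Theta$ with itself vanish. By the earlier characterisation of symplectic Lie $n$-algebroids in terms of a Hamiltonian, $(\CM,\omega,Q=\{\Theta,-\})$ is therefore a symplectic Lie $n$-algebroid.

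The second step is to identify $\CA_{n-1}$, the degree-$(n-1)$ part of $\CC^\infty(\CM)$, with $\Gamma(E)=\frX(M)\oplus\Omega^{n-1}(M)$: a degree-$(n-1)$ function is uniquely $e=X+\alpha$ with $X=X^i(x)\zeta_i$ a vector field (the $\zeta$-linear part) and $\alpha=\tfrac{1}{(n-1)!}\alpha_{i_1\ldots i_{n-1}}(x)\xi^{i_1}\cdots\xi^{i_{n-1}}$ an $(n-1)$-form (the $\xi$-polynomial part). Under this identification $\tfrac12\{e_1,e_2\}$ receives contributions only from the $\{\xi,\zeta\}$-bracket paired across the two sections, and one computes $\tfrac12\{e_1,e_2\}=\tfrac12(\iota_{X_1}\alpha_2+\iota_{X_2}\alpha_1)$, reproducing the Vinogradov pairing. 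For the bracket I would expand the derived bracket $\{\{\Theta,e_1\},e_2\}$: since $\{\Theta,-\}$ restricted to functions of $x$ and $\xi$ is the de Rham differential while $\{\Theta,\zeta_i\}\propto p_i$ supplies the anchor contraction, the antisymmetrised double bracket $\tfrac12(\{\{\Theta,e_1\},e_2\}-\{\{\Theta,e_2\},e_1\})$ should collapse to $[X_1,X_2]+\CL_{X_1}\alpha_2-\CL_{X_2}\alpha_1+\tfrac12(\dd\iota_{X_2}\alpha_1-\dd\iota_{X_1}\alpha_2)$, with the Lie derivatives appearing via the Cartan identity $\CL_X=\dd\iota_X+\iota_X\dd$, which is exactly what a double Poisson bracket with $-\xi^ip_i$ encodes; the anchor $\rho(X+\alpha)=X$ is recovered from $\rho(e)f=\{\{\Theta,e\},f\}$ on $f\in\CA_0=\CC^\infty(M)$.

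For the twisted case I would write $\Theta=\Theta_0+\Theta_\varpi$ with $\Theta_0=-\xi^ip_i$ and $\Theta_\varpi=\tfrac{1}{(n+1)!}\varpi_{i_1\ldots i_{n+1}}\xi^{i_1}\cdots\xi^{i_{n+1}}$, again of degree $n+1$. Then $\{\Theta,\Theta\}=\{\Theta_0,\Theta_0\}+2\{\Theta_0,\Theta_\varpi\}+\{\Theta_\varpi,\Theta_\varpi\}$: the first and third terms vanish because neither $\Theta_0$ nor $\Theta_\varpi$ depends on $\zeta$, and the cross term is proportional to $\xi^j(\dpar_j\varpi_{i_1\ldots i_{n+1}})\xi^{i_1}\cdots\xi^{i_{n+1}}$, i.e. to the totally antisymmetrised derivative of $\varpi$, which vanishes exactly when $\dd\varpi=0$ — recovering the $n$-plectic (closedness) hypothesis. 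Finally, the only new contribution of $\Theta_\varpi$ to the derived bracket comes from the term in which both sections feed their $\zeta$-components into $\Theta_\varpi$, which produces precisely $\iota_{X_1}\iota_{X_2}\varpi$, so the twisted derived bracket equals $\lsb-,-\rsb_\varpi$. The main obstacle I anticipate is the careful bookkeeping of Koszul signs when expanding the iterated graded Poisson brackets, in particular confirming that the Cartan-calculus terms arise with the right signs irrespective of the parity of $n-1$; but this is exactly the $n$-plectic analogue of Roytenberg's Courant-algebroid computation, and — as already remarked in the proof of Proposition \ref{Vinogradov properties} — none of the relevant identities actually depends on the degree of the forms involved, so the $n=2$ verification carries over essentially verbatim.
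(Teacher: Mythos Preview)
Your proposal is correct and follows essentially the same route as the paper: identify $\CA_{n-1}$ with $\Gamma(TM\oplus\wedge^{n-1}T^*M)$, compute $\{e_1,e_2\}$ in coordinates to recover the pairing, and expand the derived bracket $\{\{\Theta,e_1\},e_2\}$ to obtain $[X,Y]+\CL_X\beta+\iota_Y\dd\alpha+\iota_X\iota_Y\varpi$, whose antisymmetrisation is $\lsb-,-\rsb_\varpi$. You are in fact slightly more thorough than the paper, which omits the explicit verification of $\{\Theta,\Theta\}=0$ and the role of $\dd\varpi=0$ therein.
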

\begin{proof}
 An element $a$ of $\CA_{n-1}$ is of the form
 \begin{equation}
  a=\tfrac{1}{(n-1)!}\alpha_{i_1\ldots i_{n-1}}\xi^{i_1}\ldots\xi^{i_{n-1}}+X^i\zeta_i
 \end{equation}
  and thus amounts indeed to a section of $TM\oplus \wedge^{n-1}T^*M$. The remaining proof amounts to straightforward computations. First, we note that 
 \begin{equation}
 \begin{aligned}
  2(a,b)&=\{a,b\}=a\overleftarrow{\der{x^i}}\overrightarrow{\der{p_i}}b-a\overleftarrow{\der{p_i}}\overrightarrow{\der{x^i}}b+a\overleftarrow{\der{\zeta_i}}\overrightarrow{\der{\xi^i}}b-(-1)^{n-1}a\overleftarrow{\der{\xi^i}}\overrightarrow{\der{\zeta_i}}b\\
  &=\tfrac{1}{(n-2)!}X^{i}\beta_{ii_2\ldots i_{n-1}}\xi^{i_2}\ldots \xi^{i_{n-1}}-(-1)^{n-1}(-1)^{n-2}\tfrac{1}{(n-2)!}\alpha_{ii_2\ldots i_{n-1}}\xi^{i_2}\ldots \xi^{i_{n-1}}Y^{i}\\
  &=\iota_{X}\beta+\iota_{Y}\alpha~,
 \end{aligned}
 \end{equation}
 where $X+\alpha$ and $Y+\beta$ are identified with $a$ and $b$, respectively, via $\xi^i\leftrightarrow \dd x^i$ and $\zeta_i\leftrightarrow \der{x^i}$. Also,
 \begin{equation*}
 \begin{aligned}
  \{\{\Theta,a\},b\}=&X^i\dpar_i b-Y^i\dpar_i a+\xi^i(\dpar_i X^j)\tfrac{1}{(n-2)!} \beta_{ji_1\ldots i_{n-2}}\xi^{i_1}\ldots \xi^{i_{n-2}}+\\
  &+\xi^j Y^i \partial_j \tfrac{1}{(n-2)!}\alpha_{il_1\ldots l_{n-2}}\xi^{l_1}\cdots\xi^{l_{n-2}}+\tfrac{1}{(n-1)!}X^{i}Y^{j}\varpi_{ijl_1\ldots l_{n-1}}\xi^{l_1}\ldots \xi^{l_{n-1}}\\
  =&[X,Y]+\CL_X\beta+\iota_Y\dd \alpha+\iota_X\iota_Y\varpi~,
 \end{aligned}
 \end{equation*}
 where we again identified $X+\alpha$ and $Y+\beta$ with $a$ and $b$ as above. Upon antisymmetrization, we get the correct expression for $\lsb a,b\rsb=\tfrac12(\{\{\Theta,a\},b\}-\{\{\Theta,b\},a\})$.
\end{proof}
If $\varpi$ is integral, then the case $n=1$ corresponds to the Atiyah algebroid for the principal $\sU(1)$-bundle with first Chern class $\varpi$. Similarly, the case $n=2$ corresponds to a higher Atiyah algebroid of a $\sU(1)$-bundle gerbe with Dixmier-Douady class $\varpi$.

\begin{remark}\label{Remark Vinogradov}
 In general, we have the following equations for the twisted
 Vinogradov algebroid:
 \begin{equation}
 \begin{aligned}
  &\{\{\Theta,\tfrac{1}{p!}\alpha_{i_1\ldots i_p}\xi^{i_1}\ldots \xi^{i_p}\},\tfrac{1}{q!}\beta_{j_1\ldots j_q}\xi^{j_1}\ldots \xi^{j_q}\}=0~,\\
  &\{\{\Theta,\tfrac{1}{p!}\alpha_{i_1\ldots i_p}\xi^{i_1}\ldots \xi^{i_p}\},Y^i\zeta_i\}=(-1)^{n+p} Y^j(\dpar_j\tfrac{1}{p!}\alpha_{i_1\ldots i_p}-\dpar_{i_1}\tfrac{1}{(p-1)!}\alpha_{ji_2\ldots i_p})\xi^{i_1}\ldots \xi^{i_p}~,\\
  &\{\{\Theta,X^i\zeta_i\},\tfrac{1}{p!}\beta_{i_1\ldots i_p}\xi^{i_1}\ldots \xi^{i_p}\}=(X^j\dpar_j\tfrac{1}{p!}\beta_{i_1\ldots i_p}+(\dpar_{i_1}X^j)\tfrac{1}{(p-1)!}\beta_{ji_2\ldots i_p})\xi^{i_1}\ldots \xi^{i_p}~,\\
  &\{\{\Theta,X^i\zeta_i\},Y^j\zeta_j\}=(X^j\dpar_jY^k-Y^j\dpar_j X^k)\zeta_k-\tfrac{1}{(n-1)!}X^jY^k\varpi_{jki_1\ldots i_{n-1}}\xi^{i_1}\ldots \xi^{i_{n-1}}~.
 \end{aligned}
 \end{equation}
In a more compact form, where $\alpha,\,\beta$ are now generic $p$- and $q$-forms, these can be rewritten as
\begin{equation}
 \begin{aligned}
  \{\{\Theta,\alpha\},\beta\}&=0~,~~~&
\{\{\Theta,\alpha\},Y\}&=(-1)^{n+p}\iota_Y\dd\alpha~,\\
\{\{\Theta,X\},\beta\}&=\mathcal{L}_X\beta~,~~~&
\{\{\Theta,X\},Y\}&=[X,Y]+\iota_X\iota_Y\varpi~.
 \end{aligned}
\end{equation}
\end{remark}

It was shown in \cite{Rogers:2010nw,Rogers:2011zc} that a Lie 2-algebra equivalent to the shlalo of a 2-plectic manifold is contained in the Lie 2-algebra associated to the Vinogradov algebroid of the 2-plectic manifold. We are now ready to generalize this statement to all $n$.
\begin{theorem}\label{thm:embedding_twisted}
 The associated Lie $n$-algebra of the twisted Vinogradov algebroid of an $n$-plectic manifold $(M,\varpi)$ contains a sub Lie $n$-algebra which is equivalent to the shlalo of $(M,\varpi)$. This sub Lie $n$-algebra has underlying complex
 \begin{equation}
  \CC^\infty(M)\xrightarrow{~\xi^i\dpar_i~}\CA_1(M)\xrightarrow{~\xi^i\dpar_i~}\ldots\xrightarrow{~\xi^i\dpar_i~}\CA_{n-1}^{\rm Ham}(M)~,
 \end{equation}
 where we identify $\CA_i(M)\cong \Omega^i(M)$ for $i\leq {n-1}$ and $\CA^{\rm Ham}_{n-1}(M)\cong \frX_{\rm Ham}(M)\oplus \Omega^{n-1}_{\rm Ham}$. Here we embed elements $\alpha\in\Omega^{n-1}_{\rm Ham}(M)$ into $\frX_{\rm Ham}(M)\oplus \Omega_{\rm Ham}^{n-1}(M)$ as $X_\alpha+\alpha$, where $X_\alpha$ is the Hamiltonian vector field of $\alpha$ with respect to $\varpi$.
\end{theorem}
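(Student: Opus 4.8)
The plan is to first pin down the associated Lie $n$-algebra $\sL_\CM$ of the twisted Vinogradov algebroid explicitly, then carve out the sub Lie $n$-algebra, and finally exhibit a weak isomorphism to the shlalo. Combining the theorem that attaches a Lie $n$-algebra to a symplectic Lie $n$-algebroid with the realisation of the twisted Vinogradov algebroid as $\CM=T^*[n]T[1]M$ with $\Theta=-\xi^ip_i+\tfrac{1}{(n+1)!}\varpi_{i_1\dots i_{n+1}}\xi^{i_1}\cdots\xi^{i_{n+1}}$, one reads off that the degree-$i$ part of $\CC^\infty(\CM)$ is $\Omega^i(M)$ for $i<n-1$ and $\CA_{n-1}\cong\frX(M)\oplus\Omega^{n-1}(M)$; that $\mu_1=\{\Theta,-\}$ acts as $\dd$ on $\CA_{<n-1}$ and vanishes on $\CA_{n-1}$ (Remark~\ref{Remark Vinogradov}); and that for $k\ge2$ the higher products are the Bernoulli-weighted nested Poisson brackets, which on $\CA_{n-1}$ reduce for $k=2$ to the twisted Vinogradov bracket $\lsb-,-\rsb_\varpi$ and for $k\ge3$ to iterations thereof landing in $\CA_{n-k+1}=\Omega^{n-k+1}(M)$.

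Next I would show that the subcomplex $\sL'$ with underlying graded vector space $\CC^\infty(M)\to\Omega^1(M)\to\cdots\to\Omega^{n-2}(M)\to\CA_{n-1}^{\rm Ham}(M)$, all arrows $\dd$, where $\CA_{n-1}^{\rm Ham}(M):=\{X_\alpha+\alpha\mid\alpha\in\Omega^{n-1}_{\rm Ham}(M)\}\subset\CA_{n-1}$, is closed under every $\mu_k$. For $k\ge3$ there is nothing to check, since $\mu_k$ lands in $\Omega^{n-k+1}(M)\subset\sL'$ automatically; $\mu_1$ sends $\Omega^{n-2}(M)$ to exact $(n-1)$-forms, which lie in $\CA_{n-1}^{\rm Ham}(M)$ because $X_{\dd\beta}=0$; and the only genuine point is that $\mu_2$ preserves $\CA_{n-1}^{\rm Ham}(M)$. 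For that one computes $\lsb X_\alpha+\alpha,X_\beta+\beta\rsb_\varpi=[X_\alpha,X_\beta]+\gamma$ with $\gamma=\CL_{X_\alpha}\beta-\CL_{X_\beta}\alpha+\tfrac12(\dd\iota_{X_\beta}\alpha-\dd\iota_{X_\alpha}\beta)+\iota_{X_\alpha}\iota_{X_\beta}\varpi$, and uses the Cartan identities $\CL_{X_\alpha}\beta=\dd\iota_{X_\alpha}\beta-\iota_{X_\alpha}\iota_{X_\beta}\varpi$ and $\dd\iota_{X_\alpha}\iota_{X_\beta}\varpi=\iota_{[X_\alpha,X_\beta]}\varpi$ (each using only $\dd\varpi=0$ and $\dd\alpha=-\iota_{X_\alpha}\varpi$) to conclude $\dd\gamma=-\iota_{[X_\alpha,X_\beta]}\varpi$, i.e. $[X_\alpha,X_\beta]$ is precisely the Hamiltonian vector field of $\gamma$. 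Hence $\sL'$ is a sub Lie $n$-algebra of $\sL_\CM$.

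For the equivalence I would build a weak morphism between $\sL'$ and the shlalo $\sL$ of Definition~\ref{def:ordinary_shlalo} whose first component is the evident isomorphism of complexes — the identity on $\Omega^{<n-1}(M)$ and $\alpha\mapsto X_\alpha+\alpha$ on $\Omega^{n-1}_{\rm Ham}(M)$ — with an antisymmetrised ansatz $\Psi^1_k(\alpha_1,\dots,\alpha_k)=c_k\sum_\sigma(-1)^\sigma\iota_{X_{\alpha_{\sigma(1)}}}\cdots\iota_{X_{\alpha_{\sigma(k-1)}}}\alpha_{\sigma(k)}\in\Omega^{n-k+1}(M)$ for the higher components (vanishing on non-top arguments, and $\Psi^1_k=0$ for $k>n$). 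Imposing $\Psi\circ\CD=\CD'\circ\Psi$ order by order, the level-two equation together with the Cartan identities above forces $c_2=\pm\tfrac12$, so $\Psi^1_2(\alpha_1,\alpha_2)=\pm\tfrac12(\iota_{X_{\alpha_1}}\alpha_2-\iota_{X_{\alpha_2}}\alpha_1)$, recovering the known weak isomorphism in the $n=2$ case, and the higher equations fix the remaining $c_k$ recursively. Since the first component is invertible, the argument of the lemma characterising $L_\infty$-automorphisms (which applies verbatim to isomorphisms between distinct $L_\infty$-algebras) shows the resulting morphism is an isomorphism, so $\sL'$ is equivalent to the shlalo of $(M,\varpi)$.

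The main obstacle is the order-by-order verification of the morphism equations for general $n$: one must reconcile the Bernoulli-weighted nested Poisson brackets produced by the associated-Lie-$n$-algebra construction on $\CA_{n-1}^{\rm Ham}(M)$ with the bare contractions $\mu_k=(-1)^{\binom{k+1}{2}}\iota_{X_{\alpha_1}}\cdots\iota_{X_{\alpha_k}}\varpi$ of the shlalo, routed through the $\Psi^1_k$, while keeping track of all Koszul signs. Two ways to keep this manageable: observe that on top-degree arguments both product families are assembled purely from $\iota_{X_{(-)}}$, $\dd$ and $\varpi$, so the required identities collapse to a finite, $n$-independent list of Cartan-calculus relations; or recognise $\sL'$ as the shlalo of an enhanced $n$-plectic structure on $M$ whose brackets are read off from the Vinogradov data, which places $\sL'$ in the orbit of $\sL$ under an $L_\infty$-isomorphism by construction and thereby bypasses the explicit $\Psi^1_k$ altogether.
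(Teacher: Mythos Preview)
Your identification of the sub Lie $n$-algebra $\sL'$ via diagonal sections $X_\alpha+\alpha$ and the check that $\mu_2$ preserves it are correct and match the paper. The level-two component $\Psi^1_2=\tfrac12(\iota_{X_{\alpha_1}}\alpha_2-\iota_{X_{\alpha_2}}\alpha_1)$ is also essentially the paper's (the paper applies it, with suitable signs, to the full $\sL_{\CV_n}$ before restricting, which has the side effect of killing the $\CL_X\gamma$ terms in $\mu_2$ on mixed-degree arguments as well).

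The gap is at the higher levels. Your ansatz $\Psi^1_k=c_k\sum_\sigma(-1)^\sigma\iota_{X_{\alpha_{\sigma(1)}}}\cdots\iota_{X_{\alpha_{\sigma(k-1)}}}\alpha_{\sigma(k)}$ is a guess, and the assertion that ``the higher equations fix the remaining $c_k$ recursively'' presupposes that a one-parameter family suffices at each order. There is no reason for this: the defect between the Bernoulli-weighted nested brackets on $\sL'$ and the contractions $\pi_k$ of the shlalo is in general a sum of several inequivalent Cartan-calculus expressions, not a scalar multiple of one. Your first fallback (``collapse to a finite, $n$-independent list of Cartan relations'') is a hope, not an argument; your second fallback is circular, since by definition the shlalo of an enhanced $n$-plectic manifold is the image of the ordinary shlalo under an $L_\infty$-isomorphism, which is precisely what you are trying to produce.

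The paper supplies the missing mechanism. After matching $\mu_1,\mu_2$ with $\pi_1,\pi_2$ via $\Psi^1_2$, the homotopy Jacobi relation $\mu_1\circ\mu_3=\mu_2\circ\mu_2+\mu_3\circ\mu_1$ forces $\mu^{(0)}_3$ and $\pi_3$ to differ only by a $\dd$-closed term $\rho^{(0)}_3$. The key observation is that $\rho^{(0)}_3$ is a \emph{form-contraction term}---assembled from $\dd$, contractions $\iota_X$ with arbitrary vector fields, and the given forms including $\varpi$---and any such expression which is closed on an arbitrary manifold is either zero (if it lands in degree~$0$) or exact. Hence there exists some $\psi_3$ with $\dd\psi_3=\rho^{(0)}_3$; applying the isomorphism with $\Psi^1_3=\psi_3$ aligns $\mu_3$ with $\pi_3$. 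One then iterates over $k$ and over the total degree of the arguments, the process terminating because $\sL'$ is concentrated in finitely many degrees. This existence argument, rather than an explicit formula for $\Psi^1_k$, is the substantive content of the proof.
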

\begin{proof}
We denote the Lie $n$-algebra of the Vinogradov algebroid by $\sL_{\CV_n}$ and the shlalo of $(M,\varpi)$ by $\sL_M$. Their higher products are denoted by $\mu_i(-,\ldots,-)$ and $\pi_i(-,\ldots,-)$, respectively. We readily compute that the lowest two products of $\sL_{\CV_n}$ read as 
\begin{equation}
\begin{aligned}
  &\mu_1(\alpha+\gamma+X)=\dd \gamma~,\\
  &\mu_2(\alpha_1+\gamma_1+X_1,\alpha_2+\gamma_2+X_2)=[X_1,X_2]+\CL_{X_1}\alpha_2-\CL_{X_2}\alpha_1+\\&\hspace{2.5cm}+\tfrac12\left(\CL_{X_1}\gamma_2-(-1)^{n-1+|\gamma_1|}\CL_{X_2}\gamma_1+\dd \iota_{X_2}\alpha_1-\dd \iota_{X_1}\alpha_2\right)+\iota_X\iota_Y\varpi~,
\end{aligned}
\end{equation}
where $X,X_{1,2}\in \frX(M)$, $\alpha,\alpha_{1,2}\in \Omega^{n-1}(M)$ and $\gamma,\gamma_{1,2}\in \oplus_{i=0}^{n-2}\Omega^i(M)$. Applying an isomorphism of $L_\infty$-algebras with $\Psi^1_1=\id$, 
\begin{equation}
\Psi^1_2(\beta_1+X_1,\beta_2+X_2)=\tfrac12\left((-1)^{n-1+|\beta_2|}\iota_{X_1}\beta_2-(-1)^{n-1+|\beta_1|}\iota_{X_2}\beta_1\right)~,
\end{equation}
where $\beta_{1,2}\in \oplus_{i=0}^n\Omega^i(M)$ and $\Psi^1_i=0$ for $i>2$, yields the new products
\begin{equation}\label{eq:4.30}
\begin{aligned}
  \mu'_1(\alpha+\gamma+X)&=\dd \gamma~,\\
  \mu'_2(\alpha_1+\gamma_1+X_1,\alpha_2+\gamma_2+X_2)&=[X_1,X_2]+\iota_{X_1}\dd \alpha_2-\iota_{X_2}\dd \alpha_1+\iota_{X_1}\iota_{X_2}\varpi~.
\end{aligned}
\end{equation}
We now restrict to the above mentioned sub Lie $n$-algebra. That is, we consider the subset of ``diagonal sections'' $\CA_{n-1}^{\rm Ham}\subset \frX_{\rm Ham}(M)\oplus \Omega_{\rm Ham}^{n-1}(M)$ of the form $X_\alpha+\alpha$, where $X_\alpha$ is the Hamiltonian vector field of $\alpha$. We denote the resulting sub Lie $n$-algebra by $\sL^{(0)}_{\CV_n}$ and its higher products by $\mu^{(0)}_k(-,\ldots,-)$. Clearly, we have
\begin{equation}\label{eq:matching-2}
 \mu^{(0)}_1(e)|_{\Omega^\bullet}=\pi_1(e|_{\Omega^\bullet})\eand \mu^{(0)}_2(e_1,e_2)|_{\Omega^\bullet}=\pi_2(e_1|_{\Omega^\bullet},e_1|_{\Omega^\bullet})~,
\end{equation}
where $e,e_{1,2}\in \sL^{(0)}_{\CV_n}$ and $e|_{\Omega^\bullet}$ denotes the restriction of $e$ to its components in $\Omega^\bullet$. Since we consider diagonal sections, this restriction is an isomorphism $\CA_{n-1}^{\rm Ham}\cong \Omega_{\rm Ham}^{n-1}(M)$. To show that the sub Lie $n$-algebra $\sL^{(0)}_{\CV_n}$ is isomorphic to $\sL_M$, it remains to find a sequence of isomorphisms of $L_\infty$-algebras which, when applied to $\sL^{(0)}_{\CV_n}$, leads to relations for all higher brackets $\mu^{(0)}_k(-,\ldots,-)$ and $\pi_k(-,\ldots,-)$ analogous to \eqref{eq:matching-2}.

We start by considering\footnote{This case is already covered in the theorem by Roytenberg and Weinstein \cite{Roytenberg:1998vn}. Moreover, we could deal with it by direct computation. However, we wish to start our iterative proof already at this point, presenting all necessary arguments.} $\mu^{(0)}_3(e_1,e_2,e_3)$ with $e_{1,2,3}\in \sL^{(0)}_{\CV_n}$ and $d:=|e_1|+|e_2|+|e_3|=0$. Due to \eqref{eq:matching-2} and the higher homotopy relation $\mu_1\circ \mu_3=\mu_2\circ \mu_2+\mu_3\circ \mu_1$, which restricts to $\mu_1\circ \mu_3=\mu_2\circ \mu_2$ for $d=0$, we know that there is a map $\rho_3^{(0)}$ of degree $-1+d$ such that
\begin{equation}
 \left.\left(\mu^{(0)}_3(e_1,e_2,e_3)-\rho^{(0)}_3(e_1,e_2,e_3)\right)\right|_{\Omega^\bullet}=\pi_3(e_1|_{\Omega^\bullet},e_2|_{\Omega^\bullet},e_3|_{\Omega^\bullet})~,~~~\dd \rho^{(0)}_3(e_1,e_2,e_3)=0~.
\end{equation}
That is, the lower products fix the higher ones up to a closed term. We now distinguish two cases. First, for $n=2+d$, $\rho^{(0)}_3(e_1,e_2,e_3)\in \Omega^0(M)$ and a closed function is necessarily a constant one. Note that $\rho^{(0)}_3(e_1,e_2,e_3)$ is a sum of terms constructed from exterior derivatives and contractions with arbitrary vector fields acting on arbitrary forms and a given multisymplectic $(n+1)$-form, and we will call such terms form-contraction terms. However, the only form-contraction term which is constant is the vanishing one. We conclude that in the case $n=2+d$, $\rho^{(0)}_3(e_1,e_2,e_3)=0$ and we do not need to apply any further isomorphism. We can therefore continue to work with $\sL^{(1)}_{\CV_n}:=\sL^{(0)}_{\CV_n}$. In the case $n>2+d$, we observe that the only form-contraction terms which are closed are necessarily exact. We can therefore find a map $\psi_3(e_1,e_2,e_3)\mapsto e\in \sL^{(0)}_{\CV}$ of degree $-2$ with $\dd \psi_3(e_1,e_2,e_3)=\rho^{(0)}_3(e_1,e_2,e_3)$. Applying the isomorphism of $L_\infty$-algebras with
\begin{equation}
 \Psi^1_1=\id~,~~~\Psi^1_3=\psi_3\eand\Psi^1_i=0~~~\mbox{otherwise}
\end{equation}
yields an $L_\infty$-algebra $\sL^{(1)}_{\CV_n}$ with higher products $\mu^{(1)}_k(-,\ldots,-)$ such that
\begin{equation}
\mu^{(1)}_3(e_1,e_2,e_3)|_{\Omega^\bullet}=\pi_3(e_1|_{\Omega^\bullet},e_2|_{\Omega^\bullet},e_3|_{\Omega^\bullet})~.
\end{equation}
Since the higher products $\mu^{(1)}_k(-,\ldots,-)$ with $k\leq 2$ remain unaffected by the isomorphism, we retain \eqref{eq:matching-2}. Altogether we constructed an isomorphism which made one more bracket in $\sL^{(1)}_{\CV_n}$ with arguments of a certain total degree agree with $\sL_M$.

We readily apply this procedure $n-2$ more times to $\mu^{(1)}_3(e_1,e_2,e_3)$ for all $|e_1|+|e_2|+|e_3|=1,2,3,\ldots,n-2$, which leaves us with an $L_\infty$-algebra $\sL^{(n-1)}_{\CV_n}$ such that 
\begin{equation}
 \begin{aligned}
  \mu^{(n-1)}_1(e)|_{\Omega^\bullet}&=\pi_1(e|_{\Omega^\bullet})~,\\
  \mu^{(n-1)}_2(e_1,e_2)|_{\Omega^\bullet}&=\pi_2(e_1|_{\Omega^\bullet},e_1|_{\Omega^\bullet})~,\\
  \mu^{(n-1)}_3(e_1,e_2,e_3)|_{\Omega^\bullet}&=\pi_3(e_1|_{\Omega^\bullet},e_2|_{\Omega^\bullet},e_3|_{\Omega^\bullet})
 \end{aligned}
\end{equation}
for all $e,e_{1,2,3}\in \sL^{(n-1)}_{\CV_n}$.

We then continue to apply this procedure to $\mu^{(n-1)}_4(e_1,e_2,e_3,e_4)$, starting from $|e_1|+|e_2|+|e_3|+|e_4|=0$. This works because the homotopy Jacobi identity $\mu_1\circ \mu_4=\mu_2\circ \mu_3+\mu_4\circ \mu_1$, where the sum of the degrees of the arguments of the second $\mu_4$ is by one lower than that of the first $\mu_4$, fixes $\mu_4$ again up to closed terms. Further iterations then deal with all higher orders.

Since $\sL^{(0)}_{\CV_n}$ is concentrated in degrees $1-n,\ldots,0$, there can only be finitely many non-vanishing products. Therefore our iterative algorithm indeed terminates at some point. We then have constructed an $\sL^{(q)}_{\CV_n}\cong \sL^{(0)}_{\CV_n}$ for some $q\in \NN$ such that $\sL^{(q)}_{\CV_n}\cong \left.\sL^{(q)}_{\CV_n}\right|_{\Omega^\bullet}=\sL_M$.
 \end{proof}

\subsection{Automorphisms of associated Lie $n$-algebras}

Similarly to our discussion in section \ref{sec:shlalos}, we can now ask if the assignment of a Lie $n$-algebra to a symplectic Lie $n$-algebroid covers weak automorphisms of Lie $n$-algebras. That is, given an associated Lie $n$-algebra of a symplectic Lie $n$-algebroid, will an automorphism of Lie $n$-algebras yield the associated Lie $n$-algebra of an automorphic Lie $n$-algebroid? The answer is no, in general, as is easily seen from the following example.
\begin{example}
  Consider the Lie 2-algebroid $T[1]M$ with canonical symplectic structure and $\Theta=0$. Each strict automorphism of Lie 2-algebroids will lead to a Lie 2-algebroid with $\Theta=0$.  The associated Lie 2-algebra of $T[1]M$ has trivial higher products $\mu_i=0$, and the same is true for all strictly automorphic Lie 2-algebroids. A weak Lie 2-algebra morphism, however, can generate non-trivial higher products, cf.\ e.g.\ \eqref{eq:example-extended-shlalo-2-plectic}.
\end{example}
We thus have to turn to more general morphisms of symplectic Lie $n$-algebroids than the morphisms of N$Q$-manifolds of definition \ref{def:strict_morphisms}.

First, note that morphisms of Courant algebroids were discussed before in \cite{Bursztyn:0801.1663} and \cite{LiBland:0811.4554} using Dirac structures. This notion of morphism essentially amounts again to morphisms of N$Q$-manifolds and therefore is too strict for our purposes.

To allow for morphisms of N$Q$-manifolds which go beyond the strict ones, we clearly need to endow the N$Q$-manifolds with additional structure. It is therefore natural to follow the same route as in the case of multisymplectic manifolds. Indeed, lifting an N$Q$-manifold to an enhanced manifold is fully sufficient. 
\begin{definition}
 An \uline{$n$-enhanced N$Q$-manifold} is an N$Q$-manifold endowed with totally (graded) antisymmetric $k$-ary multilinear maps, $k\leq n$,
 \begin{equation}
  \langle -,\ldots, -\rangle_k:\CC^\infty(\CM)^{\wedge k}\rightarrow \CC^\infty(\CM)
 \end{equation}
 of degree $1-k$.
\end{definition}
Just as in the case of enhanced multisymplectic manifolds we can use the brackets to define a morphism of $L_\infty$-algebras acting on the trivial $L_\infty$-algebra
\begin{equation}
 \sL_{\CM}^0:= \CC^\infty(M)\xrightarrow{~Q~}\CA_1(M)\xrightarrow{~Q~}\CA_2(M) \xrightarrow{~Q~}\ldots
\end{equation}
by putting $(\Phi_M)^1_1=\id$ and $(\Phi_M)^1_k=\langle -,\ldots,-\rangle_k$ for $k>1$. 

Automorphisms of enhanced N$Q$-manifolds are now readily defined.
\begin{definition}
 An \uline{automorphism on an enhanced N$Q$-manifold} given in terms of data $(\CM,\Theta,\{-,-\},\langle -,\ldots,-\rangle)$ is an automorphism $\Psi^1$ acting on the $L_\infty$-algebra $\sL_{\CM}^0$ such that $\Psi^1_1$ is an N$Q$-map $\phi$. The transformed brackets $\langle -,\ldots, -\rangle'$ are given by 
 \begin{equation}
  \langle-,\ldots,-\rangle'_i=(\Psi\circ \Phi_M)_i\circ(\phi^*\wedge\ldots\wedge \phi^*)~.
 \end{equation}
\end{definition}
Finally, we can now generalize the association of an $L_\infty$-algebra to an enhanced symplectic Lie $n$-algebroid.
\begin{definition}
 The \uline{associated $L_\infty$-algebra of an enhanced symplectic Lie $n$-algebroid} is the result of applying the weak morphism $\Phi_M$ of $L_\infty$-algebras encoded in the brackets $\langle-,\ldots,-\rangle$ to the associated $L_\infty$-algebra of the symplectic Lie $n$-algebroid (without brackets).
\end{definition}
At this point, we have reached our last goal. We found an extension of symplectic Lie $n$-algebroids such that automorphisms of associated Lie $n$-algebras yield associated Lie $n$-algebras of automorphic symplectic Lie $n$-algebroids. We assume this feature to be quite useful. In particular, we can now directly show the connection between the shlalo of an $n$-plectic manifold and the associated Lie $n$-algebra of the enhanced Vinogradov algebroid $\CV_n$:
\begin{proposition}
 Consider the enhanced twisted Vinogradov algebroid $\CV_n$ endowed with brackets $\langle -,-\rangle$ corresponding to the concatenation of isomorphisms constructed in theorem \ref{thm:embedding_twisted}. Then the associated $L_\infty$-algebra can be restricted in degree $0$ to elements of $\frX_{\rm Ham}(M)\oplus\Omega^{n-1}_{\rm Ham}(M)$ of the form $X_\alpha+\alpha$, where $X_\alpha$ is the Hamiltonian vector field of $\alpha$ with respect to the $n$-plectic form $\varpi$. The resulting sub Lie $n$-algebra is identical to the shlalo of an $n$-plectic manifold $(M,\varpi)$.
\end{proposition}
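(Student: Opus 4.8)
The plan is to recognise this proposition as essentially a repackaging of Theorem~\ref{thm:embedding_twisted} in the language of enhanced symplectic Lie $n$-algebroids, so that the only genuine work is to match the two constructions and then invoke the theorem.

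First I would unwind the bracket data. The proof of Theorem~\ref{thm:embedding_twisted} produces a finite concatenation $\Psi=\Psi^{(q)}\circ\cdots\circ\Psi^{(1)}$ of isomorphisms of $L_\infty$-algebras, each with $(\Psi^{(j)})^1_1=\id$; by the composition rule for $L_\infty$-morphisms recorded in Section~\ref{ssec:sh-morphisms}, the total map also satisfies $\Psi^1_1=\id$. Hence, by the bijection between $L_\infty$-morphisms with trivial linear part and bracket structures, $\Psi$ is exactly the morphism $\Phi_M$ encoded by the totally graded-antisymmetric brackets $\langle-,\ldots,-\rangle_k:=\Psi^1_k:\CC^\infty(\CM)^{\wedge k}\to\CC^\infty(\CM)$ of degree $1-k$ on the N$Q$-manifold $\CM=T^*[n]T[1]M$ underlying $\CV_n$. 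One subtlety to deal with here: the maps $\psi_k$ appearing in the proof of Theorem~\ref{thm:embedding_twisted} are defined on the diagonal sub Lie $n$-algebra $\sL^{(0)}_{\CV_n}$, whereas here the brackets must live on all of $\CC^\infty(\CM)$. I would extend each $\psi_k$ to $\CC^\infty(\CM)$ by zero on a chosen complement of the diagonal sections; this does not change the restriction of $\Psi$ to diagonal sections, because that sub Lie $n$-algebra is closed under all products and the isomorphisms only feed it data built from its own elements.

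Next I would check that this bracket data makes $\CV_n$ an enhanced twisted Vinogradov algebroid in the sense of the preceding definitions, with underlying diffeomorphism $\phi=\id_M$: the N$Q$-map condition is satisfied trivially because $\Psi^1_1=\id=\id_M^*$, and the maps $\langle-,\ldots,-\rangle_k$ have the required degrees $1-k$. By the definition of the associated $L_\infty$-algebra of an enhanced symplectic Lie $n$-algebroid, that algebra is then $\Phi_M\big(\sL_{\CV_n}\big)=\Psi\big(\sL_{\CV_n}\big)$, which in the notation of the proof of Theorem~\ref{thm:embedding_twisted} is precisely $\sL^{(q)}_{\CV_n}$. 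Restricting in degree $0$ to the diagonal sections $X_\alpha+\alpha$ with $\alpha\in\Omega^{n-1}_{\rm Ham}(M)$, Theorem~\ref{thm:embedding_twisted} tells us that the restriction map $e\mapsto e|_{\Omega^\bullet}$ is an isomorphism of graded vector spaces onto the complex underlying the shlalo $\sL_M$, carrying every higher product $\mu^{(q)}_k$ to the corresponding product $\pi_k$ of $\sL_M$; hence the resulting sub Lie $n$-algebra is identical to the shlalo of $(M,\varpi)$.

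The main obstacle I expect is the bookkeeping of the first step: verifying that the diagonal sub Lie $n$-algebra $\sL^{(0)}_{\CV_n}$ is invariant under each $\Psi^{(j)}$ once the $\psi_k$ have been extended off the diagonal, so that ``apply $\Phi_M$, then restrict'' coincides with the ``restrict, then apply the isomorphisms'' order actually used in the proof of Theorem~\ref{thm:embedding_twisted}. One also has to note that $\Psi$ is a genuine automorphism, which is immediate from the lemma on automorphisms of $L_\infty$-algebras since $\Psi^1_1=\id$ is invertible. Once invariance of the diagonal sub and this compatibility of orders are in place, the remainder of the argument is purely definitional.
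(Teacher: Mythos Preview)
The paper does not give an explicit proof of this proposition; it is stated as an immediate consequence of the definitions together with Theorem~\ref{thm:embedding_twisted} (``we can now directly show the connection\ldots''). Your proposal is correct and is precisely the argument the paper is leaving implicit: identify the enhancement brackets with the concatenated isomorphism $\Psi$ from the theorem, observe that by definition the associated $L_\infty$-algebra of the enhanced $\CV_n$ is $\Psi(\sL_{\CV_n})$, and then read off the conclusion from the final paragraph of the proof of Theorem~\ref{thm:embedding_twisted}.

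The only place where you do more than the paper is in worrying about the domain of the $\psi_k$: in the proof of Theorem~\ref{thm:embedding_twisted} the first isomorphism $\Psi^1_2$ is defined on all of $\sL_{\CV_n}$, but the later $\psi_k$ are constructed on the diagonal sub $\sL^{(0)}_{\CV_n}$, so to speak of brackets on $\CC^\infty(\CM)$ one must extend them. Your extension-by-zero on a complement is a perfectly good fix, and the invariance check you flag is straightforward since the diagonal sub is closed under all products (as is verified in \eqref{eq:4.30} and its iterates). The paper simply does not comment on this point.
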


\section*{Acknowledgements}
We would like to thank Thomas Strobl for discussions related to $Q$-manifolds. We are also grateful to Brano Jur\v co for helpful conversations. The work of CS was partially supported by the Consolidated Grant ST/L000334/1 from the UK Science and Technology Facilities
Council.

\appendices

\subsection{Differential graded coalgebras and their morphisms}\label{app:Definitions}

In this appendix, we collect a number of definitions related to the coalgebra description of $L_\infty$-algebras, fixing our notation. A good reference for these definitions is \cite{0387950680}.
\begin{definition}
A (coassociative) \uline{graded coalgebra} is a graded vector space $C$ endowed with two linear maps of degree 0, the \uline{coproduct} $\Delta:C\rightarrow C\otimes C$ and the \uline{counit} $\eps:C\rightarrow \FC$, which satisfy the conditions
\begin{equation}
\begin{aligned}
 (\Delta\otimes \id)\circ \Delta=(\id\otimes \Delta)\circ \Delta~~&:~~C\rightarrow C\otimes C\otimes C~,\\
 (\id\otimes \eps)\circ\Delta=(\eps \otimes \id)\circ \Delta=\id_C~~&:~~C\rightarrow C~. 
\end{aligned}
\end{equation}
A graded coalgebra is called \uline{cocommutative}, if $\tau\circ \Delta=\Delta$, where $\tau$ is the twist operator $\tau(x\otimes y)=(-1)^{|x|\,|y|}y\otimes x$ for $x,y\in C$.
\end{definition}
Note that we adopt the usual Koszul sign convention for operators acting on elements of a graded vector space $C$. That is, $(f\otimes g)(x\otimes y):=(-1)^{|g|\,|x|}f(x)\otimes g(y)$ for $x,y\in C$ and $f,g\in \sEnd(C)$, where $|g|$ and $|x|$ denote the degrees of $g$ and $x$, respectively.
\begin{definition}
 A \uline{coalgebra morphism} from the coalgebra $(C,\Delta)$ to the coalgebra $(C',\Delta')$ is a linear map $\Psi:C\rightarrow C'$ of degree 0 such that $(\Psi\otimes \Psi)\circ \Delta=\Delta'\circ \Psi$.
\end{definition}

\begin{definition}
 A \uline{coderivation} on a graded coalgebra $(C,\Delta)$ is a linear map $\CD:C\rightarrow C$ which satisfies
 \begin{equation}\label{eq:co-Leibniz}
  \Delta\circ \CD=(\CD\otimes \id+\id\otimes \CD)\circ \Delta\eand \eps\circ \CD=0~.
 \end{equation}
\end{definition}

\begin{definition}
 A (coassociative) \uline{differential graded coalgebra} is a graded coalgebra endowed with a coderivation $\CD$ of degree +1, which squares to zero: $\CD^2=0$.
\end{definition}
As an example, consider the graded symmetric algebra 
\begin{equation}
 S(V):=\FR\oplus V\oplus (V\odot V)\oplus V^{\odot 3}\oplus \ldots
\end{equation}
of a graded vector space $V$, where $\odot$ denotes the graded symmetric tensor product. We define a coproduct $\Delta(v)=v\otimes 1+1\otimes v$ for $v\in V$, which is extended to a coproduct on $S(V)$ according to 
\begin{equation}
\begin{aligned}
 &\Delta(v_1\odot \ldots\odot v_n):=\\
 &~~~\sum_{1\leq i\leq n-1} \sum_\sigma \eps(\sigma;v_1,\ldots,v_n)(v_{\sigma(1)}\odot v_{\sigma(2)}\odot\ldots\odot v_{\sigma(i)})\otimes(v_{\sigma(i+1)}\odot \ldots \odot v_{\sigma(n)})~.
\end{aligned}
\end{equation}
Here, $\sigma$ runs over all $(i,n-i)$-unshuffles, i.e.\ permutations of $\{1,2,\ldots,n\}$ such that the first $i$ and the last $n-i$ elements are ordered, and $\eps(\sigma)$ is the Koszul sign of the unshuffle, defined via
\begin{equation}
 v_1\otimes \ldots \otimes v_n=\eps(\sigma;v_1,\ldots,v_n)(v_{\sigma(1)}\otimes v_{\sigma(2)}\otimes\ldots \otimes v_{\sigma(n)})~.
\end{equation}
Together with this coproduct, $S(V)$ is a cocommutative, coassociative graded coalgebra. 

Given a coderivation $\CD$ on $S(V)$, we can project the image of $\CD$ onto $V$, resulting in linear maps $\CD^1$. The co-Leibniz rule \eqref{eq:co-Leibniz} then implies that
\begin{equation}\label{eq:implication_co_Leibniz}
\begin{aligned}
 &\CD(v_1\odot \ldots\odot v_n):=\CD^1(v_1\odot \ldots \odot v_n)+\\ 
 &~~\sum_{1\leq i\leq n-1} \sum_\sigma \eps(\sigma;v_1,\ldots,v_n)\CD^1(v_{\sigma(1)}\odot v_{\sigma(2)}\odot\ldots\odot v_{\sigma(i)})\odot v_{\sigma(i+1)}\odot \ldots \odot v_{\sigma(n)}~,
\end{aligned}
\end{equation}
where $\sigma$ runs again over all $(i,n-i)$-unshuffles. This equation shows that $\CD$ is completely defined by $\CD^1$.

Important for us is the following specialization. The {\em reduced symmetric algebra}
\begin{equation}
 \bar{S}(V):=V\oplus (V\odot V)\oplus V^{\odot 3}\oplus \ldots
\end{equation}
of a graded vector space $V$ also carries naturally the structure of a cocommutative, coassociative graded coalgebra with the reduced coproduct $\bar \Delta(v)=\Delta(v)-v\otimes 1-1\otimes v$, which extends to
\begin{equation}
\begin{aligned}
 \bar \Delta&(v_1\odot v_2\odot\ldots\odot v_n)=\\&\sum_{1\leq i\leq n-1} \sum_\sigma \eps(\sigma;v_1,\ldots,v_n)(v_{\sigma(1)}\odot v_{\sigma(2)}\odot\ldots\odot v_{\sigma(i)})\otimes (v_{\sigma(i+1)}\odot \ldots \odot v_{\sigma(n)})~,
\end{aligned}
\end{equation}
where $\sigma$ runs again over all $(i,n-i)$-unshuffles. Coderivations and codifferentials can now be defined on $\bar{S}(V)$ by using the reduced coproduct $\bar{\Delta}$.

\bibliography{bigone}

\bibliographystyle{latexeu}

\end{document}